\begin{document}
\title{On the Characterization of Saddle Point Equilibrium for Security Games with Additive Utility\thanks{This work was supported in part by NSF CPS grant ECCS 1739969.}}
\titlerunning{On the Characterization of Saddle Point Equilibrium for Security Games}
%
\author{Hamid Emadi\inst{1}\orcidID{0000-0002-6707-9679} \and
Sourabh Bhattacharya\inst{1}\orcidID{0000-0002-1306-2775} }
\authorrunning{H. Emadi and S. Bhattacharya}
%
\institute{Iowa State University,  
   Ames, IA 50011 USA \\
\email{\{emadi,sbhattac\}@iastate.edu}}
\maketitle              
\begin{abstract}
In this work, we investigate a security game between an attacker and a defender, originally proposed in \cite{emadi2019security}. As is well known, the combinatorial nature of security games leads to a large cost matrix. Therefore, computing the value and optimal strategy for the players becomes computationally expensive. In this work, we analyze a special class of zero-sum games in which the payoff matrix has a special structure which results from the {\it additive property} of the utility function. Based on variational principles, we present structural properties of optimal attacker as well as defender's strategy. We propose a linear-time algorithm to compute the value based on the structural properties, which is an improvement from our previous result in \cite{emadi2019security}, especially in the context of large-scale zero-sum games.

\keywords{Security Game, Zero-Sum Game,  Nash Equilibrium and Computational Complexity.}
\end{abstract}
\section{Introduction}
Game theory \cite{basar1999dynamic} is a useful tool to model adversarial scenarios. \textit{Security games} model attack scenarios wherein an attacker attacks a number of targets while the defender allocates its resources to protect them to minimize the impact. The payoff for the attacker and the defender is based on the successfully attacked and protected targets, respectively. Traditionally, attacker-defender games have been modeled as zero-sum games, and the resulting saddle-point strategies are assumed to be optimal for both players. In general, two-player zero-sum game can be formulated as an linear programming problem~\cite{basar1999dynamic}, and therefore saddle-point equilibrium can be computed in polynomial time. The most efficient running time of solver for a general LP problem is ${\cal{O}}(n^{2.055})$~\cite{jiang2020faster}. However, solving security games with more than 2 resources for attacker and defender with a general LP solver is computationally expensive due to the combinatorial nature of the problem. 



In the past two decades, game theory has played an important role in quantifying and anlyzing security in large-scale networked systems. Here, we mention some of these efforts across several applications. In~\cite{boudko2018evolutionary}, authors propose an evolutionary game framework that models integrity attacks and defenses in an Advanced Metering Infrastructure in a smart grid. In~\cite{lim2012game}, a game-theoretic defense strategy is developed to protect sensor nodes from attacks and to guarantee a high level of trustworthiness for sensed data. In~\cite{law2014security}, authors consider a security game in power grid in which the utility functions are defined based on the defender's net loss, and players' action set are defined based on different scenarios such as false data injection to electronic monitoring systems. In~\cite{shan2020game}, authors provide a game theoretic approach to modeling attack and defense of smart grid in different layers of power plant, transmission and distribution network. In~\cite{hamdi2014game},\cite{boudko2019adaptive}, authors propose a game-theoretic model for the adaptive security policy and the power consumption in the Internet of Things. In~\cite{pirani2019game}, authors propose a zero-sum game for security-aware sensor placement in which the attacker minimizes its visibility by attacking certain number of nodes, and the detector maximizes the visibility of the attack signals by placement of sensors. In~\cite{nugraha2019subgame}, authors consider a cyber-security problem in a networked system as a resilient graph problem. They use backward induction to obtain optimal strategies for both players which try to disrupt the communication and maintain the connectivity, respectively. In~\cite{nugraha2019subgame}, authors propose a stochastic game-based model to provide a control synthesis method to minimize deviations from the desired specification due to adversaries, who has limited capability of observing the controller's strategy. In~\cite{grossklags2008secure}, authors focus on notions of self-protection (e.g., patching system vulnerabilities) and self insurance (e.g., having good backups) rather than only security investments in information security games. ~\cite{johnson2010uncertainty} introduces a method for resolving uncertainty in interdependent security scenarios in computer network and information security. In~\cite{trajanovski2015designing}, authors examine security game in which each player collectively minimize the cost of virus spread, and assuring connectivity. In~\cite{khouzani2015picking}, authors propose a game theoretic framework for picking vs guessing attacks in the presence of preferences over the secret space, and they analyse the trade-off between usability and security. In~\cite{laszka2018game}, authors introduce a game-theoretic framework for optimal stochastic message authentication, and they provide guarantees for resource-bounded systems. For a comprehensive survey of game-theoretic approaches in security and privacy in computer and communication, a reader could refer to \cite{manshaei2013game}.

Security games pose computational challenges in analysis and synthesis of optimal strategies due to exponential increase in the size of the strategy set for each player. A class of security games which renders tractable computational analysis is that of Stackelberg games \cite{basar1973relative}. In Stackelberg models, the leader moves first, and the follower observes the leader’s strategy before acting. Efforts involving randomized strategies~\cite{kiekintveld2009computing} and approximation algorithms \cite{bhattacharya2011approximation} for Stackelberg game formulation of security games have been proposed to efficiently allocate multiple resources across multiple assets/targets. In order to extend the efficient computational techniques for simultaneous move games, efforts have been made to characterize conditions under which any Stackelberg strategy is also a NE strategy~\cite{korzhyk2011stackelberg}. An extensive review of various efforts to characterize and reduce the computational complexity of Stackelberg games with application in security can be found in~\cite{emadi2019security}, and references therein. Here, we mention a few that are relevant to the problem under consideration. \cite{korzhyk2010complexity} shows that computing the optimal Stackelberg strategy in security resource allocation game, when attacker attacks one target, is NP-hard in general. However, when resources are homogeneous and cardinality of protection set is at most 2, polynomial-time algorithms have been proposed by the authors. \cite{korzhyk2010complexity} propose an LP formulation similar to Kiekintveld's formulation, and presents a technique to compute the mixed strategies in polynomial time.

In~\cite{korzhyk2011security}, a security game between an attacker and a defender is modeled as a non-zero game with multiple attacker resources. The authors analyze the scenario in which the payoff matrix has an {\it additive structure}. They propose an $\mathcal{O}(m^2)$ iterative algorithm for computing the mixed-strategy Nash Equilibrium where $m$ is the size of the parameter set. Motivated from \cite{korzhyk2011security}, in~\cite{emadi2019security}, we analyzed a zero-sum security game with multiple resources for attacker and defender in which the payoff matrix has an additive structure. Based on combinatorial arguments, we presented structural properties of the saddle-point strategy of the attacker, and proposed an $\mathcal{O}(m^2)$ algorithm to compute the saddle-point equilibrium and the value of the game, and provided closed-form expressions for both. In this paper, we show that a zero-sum security game can be reduced to the problem of minimizing the sum of the $k$-largest functions over a polyhedral set which can be computed in linear time~\cite{ogryczak2003minimizing}. Based on this insight, we use a variational approach to propose an $\mathcal{O}(m)$ algorithm which is the best possible in terms of the complexity.
Moreover, we present structural properties of the saddle-point strategy of both players, and an explicit expression for the value of the game.

The rest of the paper is organized as follows. In Section 2, we present the problem formulation. In Section 3, we present structural properties of the optimal attacker strategy. In Section 4, we present a linear time algorithm to compute the value of a large-scale zero-sum game. In Section 5, we present structural properties of the defender's optimal strategy, and a dual algorithm to compute the value and equilibrium. In section 6, we present our conclusions along with some future work.

\section{Problem Formulation: Security Game}

Consider a two-person zero-sum game, and let ${\cal{I}}=\left\lbrace 1,\dots m\right\rbrace $ denotes a set of targets. We assume an attacker (player 1) chooses $k_a$-targets to attack. So, there are $n_a={{m}\choose{k_a}}$ actions for player 1. On the other hand, protection budget of targets is limited, and we assume that only $k_d$ targets will be protected by the defender (player 2). So, there are $n_d={{m}\choose{k_d}}$ actions for player 2.  The defender has no knowledge about the targets chosen by player 1.  In order to find the optimal strategy for the players, we formulate a strategic security game $({\cal{X}},{\cal{Y}},A)$, where $\cal{X}$ and $\cal{Y}$ denote the action sets for attacker and defender, respectively, and $\text{card}({\cal{X}})=n_a,\text{card}({\cal{Y}})=n_d$. Every element $x_i\in {\cal{X}}$ represents a set of targets that are attacked. Similarly, ${y}_i \in {\cal{Y}}$ represents a protected targets. Each ${x}_i\in {\cal{X}}$ and ${y}_i \in {\cal{Y}}$ is a $k_a$-tuple, and $k_d$-tuple subset of ${\cal{I}}$, respectively.


The attacker has no information about the targets that are protected by the defender. Let $\phi_i$ denote the cost associated to target $i$. Moreover, without loss of generality, we assume that targets are labeled such that $\phi_i\ge \phi_j\ge 0$ for $i>j$. 

We consider an additive property for the utility function i.e., entries of the cost matrix $A$ are defined as follows:
\begin{eqnarray}\label{game matrix elements}
A_{ij} = \sum_{\left\lbrace l|l\in x_i \cap y_j^c \right\rbrace }^{}{\phi_l}.
\end{eqnarray}
$A$ represents the game matrix or payoff matrix for player 1. Since we consider a zero-sum game, the payoff matrix for player 2 is $-A$. Note that we assume both players have the complete information of the target costs.  

Let $p,q$ be the probability vectors representing the mixed strategies for player 1 and player 2, respectively. The expected utility function is 
\begin{eqnarray*}
v=p^TAq.
\end{eqnarray*} 
According to the minimax theorem, every finite two-person zero-sum game has a saddle point with the value, $v^*$, in mixed strategy $p^*=\left[{p^*_1},\dots,{p^*_{n_a}}\right]^T $  for player 1, and mixed strategy $q^*=\left[{q^*_1},\dots,{q^*_{n_d}}\right]^T$  for player 2, such that player 1's average gain is at least $v^*$ no matter what player 2 does. And player 2's average loss is at most $v^*$ regardless of player 1's strategy, that is
\begin{eqnarray}\nonumber
{p}^TAq^* \le {p^*}^TAq^*\le {p^*}^TAq.
\end{eqnarray}
 In order to solve every finite matrix game, we can reduce the game to the following LP problem,  
 \begin{equation} \label{original LP}
 	\begin{aligned}
 		&\underset{p_i}{\text{maximize}}
 		& &  v \\
 		& \text{subject to}
 		& & v\leq \sum_{i=1}^{n_a}p_ia_{ij},\quad j=1,\dots,n_d\\
 		&&& p_1+\dots+p_{n_a}=1\\
 		&&& p_i \geq 0\quad \text{for}\quad i=1,\dots,n_a.
 	\end{aligned}
 \end{equation}
However, the dimension of the decision variables in the above formulation is $(n_a+1)$ which is exponential in terms of $m$.
In the next section, we present an equivalent LP formulation with dimension $m$ to compute $v^{*}$.


\section{Structural Properties of the Attacker's Strategy}
In this section, we investigate the structural properties of the optimal attacker's strategy. The value of the game ($v^{*}$) can be defined as follows based on the attacker's mixed strategy $p$:
\begin{eqnarray*}
v^*=\underset{p}{\text{max}}\underset{1\le i\le n_d}{\text{min}}({p}^TA)_i,
\end{eqnarray*}
where $({p}^TA)_i$ denote the $i^\text{th}$ element of ${p}^TA$. From \eqref{game matrix elements}, $({p}^TA)_i$ can be written in the following form, 
\begin{eqnarray*}
({p}^TA)_i = \sum_{j=1}^{n_a}{p_ja_{ji}}=\sum_{j=1}^{n_a}p_j\sum_{l\in{x}_j\cap{{y}_i}^c}^{}\phi_l
=\sum_{l\in {y}_i^c}^{}{\alpha_l\phi_l},
\end{eqnarray*}
where, 
\begin{equation}
\alpha_j = \sum_{\{i|j\in{x}_i\}}^{}{p_i}\implies M_{\left[ m,k_a\right]} p =\alpha \label{alpha},
\end{equation}
where $\alpha= [\alpha_1,\dots,\alpha_m]^T$, and $M_{\left[ m,k_a\right]}\in \mathbb{R}^{m\times n_a}$ is a {\it combinatorial matrix} \footnote{A combinatorial matrix $M_{[m,k]}\in\mathbb{R}^{m\times {m\choose k}}$ is a boolean matrix containing all combinations of $k$ 1's. Each column of $M$ has $k$ entries equal to 1 and rest of the entries equal to 0. In other words, $M$ is a matrix constructed from ${{m}\choose{k}}$ combinations of $k$ one in an $m$ dimensional vector}. 

Since $M_{\left[ m,k_a\right]}$ is a combinatorial matrix, $\sum_{i=1}^{m}\alpha_i = k_a$. Moreover, in the following lemma we show that for any feasible $\alpha$ there exists a feasible $p$. Hence, the problem reduces to computing $\alpha^*$.

\begin{lemma}\label{lemma5}
$M_{\left[ m,k_a\right]}$ is a surjective mapping. 
\end{lemma}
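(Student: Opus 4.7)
The plan is to identify the image of the probability simplex under $M_{[m,k_a]}$ with the hypersimplex $\mathcal{P} := \{\alpha \in [0,1]^m : \sum_{i=1}^m \alpha_i = k_a\}$, and to argue that this polytope exactly captures the set of feasible $\alpha$'s. First I would observe that since $Mp$ is a convex combination of the columns of $M$ whenever $p$ lies in the probability simplex, and since by construction the columns of $M_{[m,k_a]}$ are precisely the $\binom{m}{k_a}$ characteristic vectors $\chi_S \in \{0,1\}^m$ of the $k_a$-element subsets $S \subseteq \{1,\dots,m\}$, the image of $M_{[m,k_a]}$ is the convex hull of these characteristic vectors. The inclusion $\mathrm{conv}\{\chi_S\} \subseteq \mathcal{P}$ is immediate because every $\chi_S$ satisfies the two defining constraints of $\mathcal{P}$ and $\mathcal{P}$ is convex.

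The substantive step is the reverse inclusion $\mathcal{P} \subseteq \mathrm{conv}\{\chi_S\}$, which also delivers surjectivity onto the feasible set. I would carry this out by induction on the number $f(\alpha)$ of fractional components of $\alpha$. If $f(\alpha)=0$, then $\alpha \in \{0,1\}^m$ and $\sum_i \alpha_i = k_a$ forces $\alpha$ to be a characteristic vector of some $k_a$-subset, so $\alpha$ is a column of $M$ and we take the corresponding standard basis probability vector. If $f(\alpha)\ge 1$, the integrality of $k_a$ together with $\sum_i \alpha_i = k_a$ forces at least two fractional coordinates, say indices $i$ and $j$ with $\alpha_i,\alpha_j \in (0,1)$. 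Define the perturbation direction $d = e_i - e_j$ and let
\begin{equation*}
\delta^+ = \min(1-\alpha_i,\ \alpha_j), \qquad \delta^- = \min(\alpha_i,\ 1-\alpha_j).
\end{equation*}
Then $\alpha^+ := \alpha + \delta^+ d$ and $\alpha^- := \alpha - \delta^- d$ both lie in $\mathcal{P}$, each drives at least one of the coordinates $i,j$ to an integer value, and $\alpha = \lambda \alpha^+ + (1-\lambda)\alpha^-$ with $\lambda = \delta^-/(\delta^+ + \delta^-) \in [0,1]$. Since $f(\alpha^\pm) < f(\alpha)$, the inductive hypothesis applies to both endpoints, and combining the two resulting probability vectors with weights $\lambda$ and $1-\lambda$ produces the desired $p$ with $M_{[m,k_a]} p = \alpha$.

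The main obstacle is the bookkeeping in the inductive step: one must verify that at least one of $\delta^+,\delta^-$ is strictly positive (which follows from $\alpha_i,\alpha_j \in (0,1)$, so both are positive) and that the perturbations do not inadvertently create new fractional coordinates (they don't, because $d$ only affects coordinates $i$ and $j$). With these checks in place, the induction closes and surjectivity of $M_{[m,k_a]}$ onto the feasible set $\mathcal{P}$ follows, justifying the reduction of the equilibrium computation to the $\alpha$-variable.
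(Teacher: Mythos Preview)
Your argument is correct and follows a genuinely different route from the paper. The paper proceeds by induction on the ambient dimension $m$: it writes
\[
M_{[m,k_a]}=\begin{pmatrix}\mathbf{1}^T & \mathbf{0}^T\\ M_{[m-1,k_a-1]} & M_{[m-1,k_a]}\end{pmatrix},
\]
splits $p$ accordingly, and invokes the inductive hypothesis for $M_{[m-1,k_a-1]}$ and $M_{[m-1,k_a]}$ on suitably rescaled tails of $\alpha$, with base cases $M_{[m,1]}=I$ and $M_{[m+1,m]}=\mathbf{1}\mathbf{1}^T-I$. Your proof instead inducts on the number of fractional coordinates of $\alpha$ and uses a Birkhoff--von~Neumann style edge-walk: pick two fractional coordinates, push along $e_i-e_j$ to the two nearest faces of the hypersimplex, and recurse. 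This is the standard argument that the vertices of the hypersimplex $\{\alpha\in[0,1]^m:\sum_i\alpha_i=k_a\}$ are exactly the $0/1$ vectors of weight $k_a$, which is precisely the column set of $M_{[m,k_a]}$. What your approach buys is conceptual transparency and independence from the recursive block structure; it also makes explicit that the result is really a statement about the integrality of the hypersimplex (equivalently, total unimodularity of the interval constraints). The paper's recursion, on the other hand, ties the construction directly to the combinatorial matrix and yields $p$ in a form aligned with the matrix's block structure, which can be convenient if one later wants to read off $p$ explicitly from $\alpha$ for the security game. Either argument suffices for the reduction to the $\alpha$-variables.
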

\begin{proof}
Please refer to the Appendix for the proof.
\end{proof}
Based on the above lemma, the problem reduces to computing $\alpha^*$. 

\begin{lemma} \label{Theorem2}
$\alpha^*$ satisfies the following property:
\begin{eqnarray}
\alpha^*_i\phi_i\ge \alpha^*_j\phi_j \quad\text{for}\quad i>j
\end{eqnarray}

where $\alpha_j^*$ is defined in~\eqref{alpha} for $p^*$.
\end{lemma}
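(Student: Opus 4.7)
The plan is to argue by contradiction via a directional perturbation of $\alpha^*$. By Lemma \ref{lemma5}, I recast the problem as $v^* = \max_{\alpha \in \mathcal{A}} g(\alpha)$, where $\mathcal{A} = \{\alpha : 0 \leq \alpha_l \leq 1,\ \sum_l \alpha_l = k_a\}$ and $g(\alpha) = \min_{|T|=k_d} \sum_{l \notin T} \alpha_l \phi_l$. The minimizing $T$ collects the indices of the $k_d$ largest $\alpha_l\phi_l$, so $g(\alpha)$ equals the sum of the $m-k_d$ smallest values among $\{\alpha_l\phi_l\}$ and is concave in $\alpha$. Among all optima I fix $\alpha^*$ to be the one that lexicographically maximizes $(\alpha_m,\alpha_{m-1},\dots,\alpha_1)$; equivalently, among maximizers of $g$ it maximizes the secondary linear functional $\sum_l l\,\alpha_l$, so any feasible modification that preserves $g$ but strictly increases this functional is forbidden.

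Suppose toward contradiction that $\beta^*_i := \alpha^*_i\phi_i < \alpha^*_j\phi_j =: \beta^*_j$ for some $i>j$. Because $\phi$ is non-decreasing, this forces $\phi_j>0$, $\alpha^*_i<\alpha^*_j$, and hence $\alpha^*_j>0$; moreover $\alpha^*_i<1$, since $\alpha^*_i=1$ would give $\beta^*_i = \phi_i \geq \phi_j \geq \beta^*_j$. The direction $d = e_i - e_j$ is therefore feasible from $\alpha^*$. Let $S^*$ be an index set of the $m-k_d$ smallest $\beta^*_l$. Since $\beta^*_i < \beta^*_j$, the configuration $i\notin S^*,\,j\in S^*$ is impossible; the remaining configurations yield right directional derivatives $\phi_i-\phi_j$ (both in $S^*$), $\phi_i$ (only $i\in S^*$), or $0$ (neither in $S^*$). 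Optimality of $\alpha^*$ forces $g'(\alpha^*;d)\leq 0$, which rules out the first two whenever $\phi_i > \phi_j$. In the degenerate subcase $\phi_i = \phi_j$, the swap $\alpha^*_i \leftrightarrow \alpha^*_j$ leaves the $\beta$-multiset (and thus $g$) invariant while changing $\sum_l l\,\alpha_l$ by $(i-j)(\alpha^*_j-\alpha^*_i)>0$, contradicting the lex-maximal choice.

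The main obstacle is the remaining case, $\phi_i > \phi_j$ with both $i,j \notin S^*$, where $g$ is flat along $d$. Here I push $\epsilon$ continuously along $d$ until the first of the events (a) $\alpha_i=1$, (b) $\alpha_j=0$, or (c) $\beta_j(\epsilon)$ descends to $\mu^* := \max_{l\in S^*}\beta^*_l$, forcing $j$ to enter $S^*$ and the directional derivative to become negative. On this flat interval $g$ stays equal to $v^*$, so the perturbed $\alpha'$ remains an optimum, while $\sum_l l\,\alpha_l$ strictly increases by $(i-j)\epsilon > 0$. At the stopping point a direct check in fact delivers the sorted pair---$\beta'_i = \phi_i \geq \phi_j \geq \beta'_j$ in (a); $\beta'_j = 0 \leq \beta'_i$ in (b); and $\beta'_j = \mu^* \leq \beta^*_i < \beta'_i$ in (c), using $\beta^*_i \geq \mu^*$ since $i\notin S^*$---but the essential point is that the secondary objective strictly improved, contradicting the lex-maximality of $\alpha^*$. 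Hence no violating pair $(i,j)$ can exist at the distinguished optimum, which establishes $\alpha^*_i\phi_i \geq \alpha^*_j\phi_j$ for all $i>j$.
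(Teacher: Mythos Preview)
Your approach is the same variational idea as the paper's---perturb $\alpha^*$ in the feasible direction $e_i-e_j$ and invoke first-order optimality---but you carry it out with considerably more care. The paper simply asserts that $(e_i-e_j)^T\nabla_\alpha v|_{v^*}\le 0$ ``leads to a contradiction,'' without saying why (non-positivity of a directional derivative at a maximum is what one expects, not a contradiction). You correctly isolate the delicate case where the directional derivative vanishes---when neither $i$ nor $j$ lies in the bottom-$(m-k_d)$ set---and resolve it by selecting, among all optima, one maximizing the secondary functional $\sum_l l\,\alpha_l$. This tie-breaking is genuinely necessary: the ordering $\alpha^*_i\phi_i\ge\alpha^*_j\phi_j$ can fail for \emph{some} optimal $\alpha^*$ (e.g.\ $m=3$, $k_a=2$, $k_d=2$, $\phi=(1,2,3)$: the point $\alpha^*=(1,\tfrac{2}{3},\tfrac{1}{3})$ is optimal yet $\alpha^*_3\phi_3=1<\tfrac{4}{3}=\alpha^*_2\phi_2$), so your existential reading of the lemma is the correct one and your extra device is exactly what the paper's proof is missing.

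Two small points. First, lexicographic maximality of $(\alpha_m,\dots,\alpha_1)$ is \emph{not} equivalent to maximizing $\sum_l l\,\alpha_l$; since only the latter is used, drop the word ``equivalently.'' Second, your case split on a single chosen $S^*$ is slightly loose when several $T$'s achieve the minimum in $g$: the right directional derivative is the minimum over all active $T$, so ``both in $S^*$'' for your particular $S^*$ does not by itself force $g'>0$. The clean fix is to note that for \emph{every} active $T$ the contribution $[i\notin T]\phi_i-[j\notin T]\phi_j$ is nonnegative (the configuration $i\in T,\ j\notin T$ being impossible when $\beta^*_i<\beta^*_j$), hence $g'(\alpha^*;d)\ge 0$; optimality then forces $g'(\alpha^*;d)=0$, and your secondary-objective contradiction applies uniformly---no separate treatment of ``both in $S^*$'' versus ``neither in $S^*$'' is needed.
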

\begin{proof}
Assume the following holds for $\alpha^*$:
\begin{eqnarray}
\alpha_{i_m}^*\phi_{i_m}\ge \dots \ge \alpha_{i_1}^*\phi_{i_1}.
\end{eqnarray}
Note that $v^*$ is $(m-k_d)$-sum of smallest $\alpha_l\phi_l$, that is $v^* = \sum_{l=i_1}^{i_{m-k_d}}{\alpha_l^*\phi_l} $.
Assume that there exist $i$ and $j$ such that $\alpha_i^*\phi_i< \alpha_j^*\phi_j$ for $i>j$. $ \phi_i\ge \phi_j\Rightarrow\alpha_j^*> \alpha_i^*$. $\alpha_i^*<1, \alpha_j^*>0 \Rightarrow (e_{i}-e_{j})^T\nabla_{\alpha}v|_{v^*}\le0$. Since $v^*$ is the maximum value of the $(m-k_d)$-sum of smallest $\alpha_l\phi_l$, we arrive at a contradiction. Therefore, $\alpha^*_i\phi_i\ge \alpha^*_j\phi_j$ for $i>j$. 
\qed
\end{proof}


\begin{corollary} \label{corollary2}
$\alpha^*$ and $v^*$ satisfy the following property:
\begin{enumerate} [label=(\alph*)]
\item $v^*=\sum_{l=1}^{m-k_d}\alpha^*_l\phi_l$,
\item $\alpha^*_m\phi_m=\dots=\alpha^*_s\phi_s > \alpha^*_{s-1}\phi_{s-1}\ge \dots \ge \alpha^*_{s-r}\phi_{s-r}$, $\alpha^*_{s-r-1}=\dots=\alpha^*_{1}=0$ \\for $1\le s \le \max(k_a,m-k_d)$ and $0\le r \le s-1$.
\end{enumerate}
\end{corollary}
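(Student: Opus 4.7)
My approach is to use the monotonicity from Lemma~\ref{Theorem2} to collapse the min over $k_d$-subsets in $v^{*}$, yielding (a) directly, and then to read off (b) from natural definitions combined with Lemma~\ref{Theorem2}, with a perturbation argument for the only non-trivial inequality.

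For part (a), Lemma~\ref{Theorem2} gives $\alpha^{*}_1\phi_1 \le \alpha^{*}_2\phi_2 \le \cdots \le \alpha^{*}_m\phi_m$. Since $y_i$ ranges over $k_d$-subsets, the minimum of $\sum_{l \in y_i^c}\alpha^{*}_l\phi_l$ is attained when $y_i^c$ collects the $m - k_d$ smallest values, i.e., $y_i^c = \{1, \ldots, m - k_d\}$. This yields $v^{*} = \sum_{l=1}^{m-k_d}\alpha^{*}_l\phi_l$.

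For part (b), I would first define $s - r - 1$ as the largest index with $\alpha^{*}_l = 0$, or $0$ if no such index exists. Since $\alpha^{*}_l\phi_l$ is non-decreasing and $\phi_l > 0$, the zero-valued indices form a prefix, justifying $\alpha^{*}_{s-r-1} = \cdots = \alpha^{*}_1 = 0$. I would then define $s$ as the smallest index with $\alpha^{*}_s\phi_s = \alpha^{*}_m\phi_m$. The equalities $\alpha^{*}_m\phi_m = \cdots = \alpha^{*}_s\phi_s$ follow from Lemma~\ref{Theorem2}, the strict drop $\alpha^{*}_s\phi_s > \alpha^{*}_{s-1}\phi_{s-1}$ is by the definition of $s$, and the non-decreasing ordering on $\{s-r, \ldots, s-1\}$ is a restriction of Lemma~\ref{Theorem2}. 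The bounds $1 \le s$ and $0 \le r \le s - 1$ are immediate from $s \in \{1, \ldots, m\}$ and $s - r \ge 1$.

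The non-trivial claim $s \le \max(k_a, m - k_d)$ is proved by contradiction. Suppose $s > \max(k_a, m - k_d)$, so both $s > m - k_d$ (hence the top equal block $\{s, \ldots, m\}$ lies strictly inside the top-$k_d$ indices and contributes nothing to $v^{*}$) and $s > k_a$. Consider the perturbation $\Delta\alpha_s = -\epsilon$, $\Delta\alpha_{l'} = +\epsilon$ for some $l' \le m - k_d$. The slack $\alpha^{*}_s\phi_s > \alpha^{*}_{s-1}\phi_{s-1}$ guarantees feasibility at index $s$; if in addition $l'$ is chosen with $\alpha^{*}_{l'} < 1$ and $\alpha^{*}_{l'}\phi_{l'} < \alpha^{*}_{l'+1}\phi_{l'+1}$, then the perturbation preserves $\sum_l\alpha_l = k_a$, maintains monotonicity, and strictly increases $v^{*}$ by $\epsilon\phi_{l'} > 0$, contradicting optimality of $\alpha^{*}$. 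The main obstacle is showing that such an $l'$ always exists under $s > \max(k_a, m - k_d)$; this requires a case analysis on which of the bounds $\alpha_l \le 1$ and $\alpha_l\phi_l \le \alpha_{l+1}\phi_{l+1}$ are tight for $l \le m - k_d$, combined with a counting argument that uses the budget $\sum_l\alpha^{*}_l = k_a$, the strictly positive top-block mass $\sum_{l \ge s}\alpha^{*}_l > 0$, and the assumption $s > k_a$ to rule out every configuration with no improving direction.
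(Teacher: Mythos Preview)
Your argument for part~(a) is correct and matches the paper's: Lemma~\ref{Theorem2} orders the products $\alpha^*_l\phi_l$, so the minimizing $y_i^c$ is $\{1,\dots,m-k_d\}$.

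For part~(b), your definitions of $s$ and $r$ and the chain of equalities/inequalities that follow from Lemma~\ref{Theorem2} are fine. The gap is in the bound $s\le\max(k_a,m-k_d)$. You are trying to prove that \emph{every} optimal $\alpha^*$ satisfies this bound, via a strict-improvement perturbation, but this is false. Take $m=3$, $k_a=2$, $k_d=2$, $\phi=(1,2,4)$; then $k=\max(2,1)=2$, $v^*=1$, and $\alpha^*=(1,\tfrac12,\tfrac12)$ is optimal with $\alpha^*\phi=(1,1,2)$, hence $s=3>k$. In this example there is no $l'\le m-k_d=1$ with $\alpha^*_{l'}<1$, so your perturbation cannot even be launched; the ``case analysis'' you anticipate cannot close the argument because the conclusion you are aiming for is simply not true for this $\alpha^*$.

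The paper reads the bound as an \emph{existence} statement: some optimal $\alpha^*$ has $s\le k$. Its proof is correspondingly softer. It picks any $j\le k$ with $\alpha_j^*<1$ (which always exists by the budget $\sum_l\alpha_l=k_a\le k$ together with $\alpha_i^*>0$ for the $i>k$ where the strict drop occurs), and moves mass along $e_j-e_i$. If the directional derivative is strictly positive, contradiction; if it is zero (which happens precisely when $m-k_d<j\le k_a$), the paper concludes there are multiple optima and that at least one of them has the desired structure. Two practical consequences for your write-up: (i) drop the requirement that the perturbation preserve the monotone ordering of $\alpha_l\phi_l$---monotonicity is a \emph{consequence} of optimality, not a constraint on feasible $\alpha$, so imposing it only makes the search for $l'$ harder; (ii) relax your target index from $l'\le m-k_d$ to $j\le k$, and accept the zero-derivative branch as a multiplicity argument rather than a strict contradiction.
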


\begin{proof}
(a) Since $v^*$ is $(m-k_d)$-sum of smallest $\alpha_l\phi_l$, This property can be concluded directly from Lemma~\ref{Theorem2}. 

(b) Let $k$ denotes $\max(k_a,m-k_d)$. First, we show that there is an optimal solution such that $\alpha^*_m\phi_m=\dots=\alpha^*_k\phi_k$. We proceed the proof by contradiction. We assume that $\exists i\in\{k+1,\dots,m\}$ such that $\alpha_i^*\phi_i>\alpha_{i-1}^*\phi_{i-1}$. Since $\sum_{l=1}^{m}\alpha_l=k_a$, there is $j\in\{1,\dots k\}$ such that $\alpha_j^*<1$. Therefore, $(e_{j}-e_{i})^T\nabla_{\alpha}v|_{v^*}\le0$. Note that if $(e_{i}-e_{j})^T\nabla_{\alpha}v|_{v^*}<0$ is a contradiction with the fact that $v^*$ is the optimal value, and if $(e_{i}-e_{j})^T\nabla_{\alpha}v|_{v^*}=0$ then it means there are multiple solutions which at least one satisfy the property. Moreover, from Lemma~\ref{Theorem2}, if $\alpha^*_m\phi_m=\alpha^*_s\phi_s$ then $\alpha^*_m\phi_m=\dots=\alpha^*_s\phi_s$, which completes the proof.
\qed
\end{proof}

Let $s^*, r^*$ denote the indices for optimal structure expressed in Corollary~\ref{corollary2} . Let ${\cal{U}}_a$ and ${\cal{U}}_d$, called active sets of attacker and defender, denote the union of $x_i$'s and $y_i$'s corresponding to the support sets of $p^*$ and $q^*$, respectively.    
\begin{corollary}\label{corollary3}
In a security game $({\cal{X}},{\cal{Y}},A)$, ${\cal{U}}_a = \{s^*-r^*,\dots,m\}$. When $s^*> m-k_d$, the defender has a pure strategy with ${\cal{U}}_d=\{m-k_d+1,\dots,m\}$, else ${\cal{U}}_d=\{s^*,\dots,m\}$ (for $s^*\le m-k_d$). 
\end{corollary}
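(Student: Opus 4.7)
The plan is to characterize ${\cal{U}}_a$ directly from the support structure of $\alpha^*$ given in Corollary~\ref{corollary2}, and then to deduce ${\cal{U}}_d$ by computing the defender's best response to $p^*$.

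For ${\cal{U}}_a$, I would invoke the identity $\alpha_j^* = \sum_{\{i|j\in x_i\}} p_i^*$ from~\eqref{alpha}. Because $p_i^*\ge 0$, $\alpha_j^*>0$ is equivalent to $j$ belonging to some $x_i$ in the support of $p^*$, i.e., to $j\in{\cal{U}}_a$. Corollary~\ref{corollary2}(b) already pins down $\{l : \alpha_l^*>0\}=\{s^*-r^*,\dots,m\}$ (the positive staircase, with zero entries to the left), so ${\cal{U}}_a=\{s^*-r^*,\dots,m\}$ follows immediately.

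For ${\cal{U}}_d$, I would rewrite the expected utility as $v = p^{*T}Aq = \sum_l \phi_l\alpha_l^*(1-\beta_l)$ with $\beta_l = \sum_{\{i|l\in y_i\}}q_i$. Minimizing this over the defender's strategies is equivalent to maximizing $\sum_l \phi_l\alpha_l^*\beta_l$ subject to $\sum_l\beta_l=k_d$ and $0\le\beta_l\le 1$; by Lemma~\ref{lemma5} every such $\beta$ is realized by some mixed $q$. Since Lemma~\ref{Theorem2} gives $\alpha_l^*\phi_l$ non-decreasing in $l$, any optimal $\beta$ concentrates on the largest-index targets. I would then split on the hypothesis: when $s^*>m-k_d$, the tied top-tier region $\{s^*,\dots,m\}$ has fewer than $k_d$ indices, so the defender must also draw from the staircase below, and the pure strategy $y^*=\{m-k_d+1,\dots,m\}$ picks precisely the top $k_d$ weights in monotone order, giving ${\cal{U}}_d=\{m-k_d+1,\dots,m\}$; when $s^*\le m-k_d$, the tied top-tier region already has at least $k_d$ indices, and the uniform symmetric mixture over all size-$k_d$ subsets of $\{s^*,\dots,m\}$ puts mass $\beta_l = k_d/(m-s^*+1)$ on each $l\in\{s^*,\dots,m\}$, yielding ${\cal{U}}_d=\{s^*,\dots,m\}$.

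The main obstacle I anticipate is confirming that the selected defender strategy pairs with $p^*$ to form an actual saddle point rather than merely being a best response to $p^*$. Because $p^*$ has already been established as the max-min attacker strategy in the preceding lemmas, any defender best response paired with $p^*$ yields a saddle-point equilibrium, which resolves this concern. A secondary subtlety is that in Case~1 the monotone staircase may contain ties near index $m-k_d$, allowing other best-response pure strategies whose active sets differ; this is handled by fixing the canonical choice $\{m-k_d+1,\dots,m\}$, whereas in Case~2 the symmetric mixture is the specific realization that spans the entire tied region as the active set.
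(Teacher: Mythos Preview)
Your treatment of ${\cal{U}}_a$ is correct and coincides with the paper's argument: positivity of $\alpha_j^*$ is equivalent to $j\in{\cal{U}}_a$, and Corollary~\ref{corollary2}(b) pins down the positive indices.

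For ${\cal{U}}_d$, however, there is a genuine gap. The assertion that ``any defender best response paired with $p^*$ yields a saddle-point equilibrium'' is false: a best reply to a max--min strategy need not itself be a min--max strategy. Take $m=3$, $k_a=k_d=1$, $(\phi_1,\phi_2,\phi_3)=(1,2,3)$. Here $s^*=2$, $r^*=0$, $\alpha^*=(0,\tfrac35,\tfrac25)$, $v^*=\tfrac65$, and this falls under your Case~2 since $s^*\le m-k_d=2$. Your uniform mixture over the size-$1$ subsets of $\{2,3\}$ gives $q=(0,\tfrac12,\tfrac12)$; it is indeed a best reply to $p^*$ (one checks $p^{*T}Aq=\tfrac65$), but it is \emph{not} min--max optimal, since the pure attack on target~$3$ yields $(Aq)_3=\tfrac32>\tfrac65$. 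Hence $(p^*,q)$ is not a saddle point, and your exhibited $q$ is not an equilibrium strategy whose support could determine ${\cal{U}}_d$. The actual equilibrium defender here is $q^*=(0,\tfrac25,\tfrac35)$, which equalizes $(1-\beta_l)\phi_l$ across $l\in\{2,3\}$ rather than $\beta_l$ itself.

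The paper proceeds differently, via complementary slackness rather than construction: it shows that every column $j$ with $({p^*}^TA)_j>v^*$ must have $q_j^*=0$ in \emph{any} saddle-point $q^*$, and then reads off which pure defender actions $y_j$ are thereby excluded from the support. This constrains every equilibrium $q^*$ directly and never needs to exhibit one. To salvage your constructive route you would have to verify independently that your candidate $q$ satisfies $\max_{p}\,p^TAq=v^*$; in Case~2 that forces the non-uniform $\beta$ equalizing $(1-\beta_l)\phi_l$, which is precisely the defender structure the paper develops later in Section~5.
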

\begin{proof}
The proof of first part directly follows from the fact that $\alpha_m,\dots,\alpha_{s^*-r^*} > 0$ and $\alpha_{s^*-r^*-1}=\dots=\alpha_1 = 0$.

For second part, consider $({p^*}^TA)_j$. The following condition holds for $U_{i^*,i^*+r^*}$: 
\begin{eqnarray*} 
&\alpha_m\phi_m=\dots=\alpha_{s^*}\phi_{s^*}>\alpha_{s^*-1}\phi_{s^*-1}\ge\dots\ge\alpha_{s^*-r^*}\phi_{s^*-r^*}\\
&\alpha_{s^*-r^*-1} =\dots=\alpha_{1}= 0. 
\end{eqnarray*}
When $k_a+k_d \le m$, $({p^*}^TA)_j > v^*$ for all $j$ such that $\{1,\dots, s-1\} \not \subseteq y_j^c$. Consequently, 
\begin{equation*}
q^*_j = 0 \quad \forall j \quad \text{s.t.} \quad \{1,\dots, s-1\} \not \subseteq y_j^c,
\end{equation*}
else ${p^*}^TAq^* > v^*$, which is a contradiction. Therefore, any $q_j^*$ corresponding to $y_j$ such that $y_j \cap \{1,\dots,s-1\}\neq \emptyset$ is zero. In other words,  ${\cal{U}}_d= \{s^*,\dots, m\}$.

Based on similar arguments, we can conclude that ${\cal{U}}_d= \{s^*,\dots, m\}$ for $k_a+k_d > m$ and $s^*\le m-k_d$. When $k_a+k_d > m$ and $s^*> m-k_d$, $({p^*}^TA)_j > v^*$ for all $j$ such that $\{s,\dots, m\} \cap y_j^c\ne \emptyset$, and consequently $q^*_j = 0$. Therefore,
\begin{equation*}
q^*_j = 0 \quad \forall j \quad \text{s.t.} \quad \{s,\dots, m\} \not \subseteq y_j
\end{equation*}
Since the defender has $k_d$ resources, it has a pure strategy to allocate it to targets $\{m-k_d+1,\dots,m\}$. 
\qed
\end{proof}
\begin{remark} 
According to Corollary \ref{corollary3}, both players choose mixed strategies that involve targets with highest impacts ($\phi_i$).
\end{remark}

\section{Computation of $v^*$}

Based on Lemma~\ref{Theorem2}, we can solve the following LP to compute $v^*$:
 \begin{equation} \label{reduced LP}
 	\begin{aligned}
 		&\underset{\alpha_1,\dots,\alpha_m}{\text{maximize}}
 		& &  \sum_{l=1}^{m-k_d}{\alpha_l\phi_l} \\
 		& \text{subject to}
 		& & \alpha_i\phi_i \ge \alpha_j\phi_j\quad \text{for all} \quad i>j\\
 		&&& \sum_{i=1}^{m}{\alpha_i}=k_a\\
 		&&& \alpha_i \le 1 \quad i =1,\dots, m.
 	\end{aligned}
 \end{equation}
Note that from Lemma~\ref{lemma5}, we show that for any $\alpha$ which satisfies the conditions in~\eqref{reduced LP}, there exists a $p$ on a simplex such that $Mp=\alpha$, which satisfy the feasibility condition of~\eqref{reduced LP}.

From Corollary~\ref{corollary2}, $v^*$ and $\alpha^*$ can be computed by examining all feasible solutions for $1\le s \le k$ ($k=\max(k_a,m-k_d)$), and $0\le r \le s-1$ which satisfy the condition in Corollary~\ref{corollary2} (b). Let $U$ denote a square matrix of dimension $k$. The $(i,i+r)^{\text{th}}$ entry of $U$ (denoted by $U_{i,i+r}$) is the solution to the following problem:  

\begin{eqnarray}\nonumber \label{off-diagonal property}
\begin{aligned}
&\underset{\alpha_1,\dots,\alpha_m}{\text{maximize}}
& & \sum_{l=1}^{m-k_d}{\alpha_l\phi_l} \\
&\text{subject to}
& & \alpha_m\phi_m=\dots=\alpha_s\phi_s>\alpha_{s-1}\phi_{s-1}\ge\dots\ge\alpha_{s-r}\phi_{s-r}\\
&&& \alpha_{s-r-1} =\dots=\alpha_{1}= 0, \qquad s=k-i+1\\
&&& 0\le \alpha_l\le 1, \forall l\in \cal{I}.
\end{aligned}
\end{eqnarray} 

The following theorem relates $v^{*}$ to the elements of $U$.

\begin{theorem} \label{Theorem3}
$v^*=\underset{i,j}{\text{max}}\{{U_{i,j}}\}$, and the entries of $U$ are as follows:

\underline{For $i\le k_a+k_d-m$:}
\begin{gather*}    
\begin{cases}
  U_{i,i}=0,\\    
  U_{i,i+r}= \sum_{l=s-r}^{m-k_d}\phi_l \quad \text{when} \quad{c_i\phi_s}\ge {k_a-r}> {c_i\phi_{s-1}}   
\end{cases}
\end{gather*}

\underline{For $i> k_a+k_d-m$:}
\begin{gather*}    
\begin{cases}
  U_{i,i}=\frac{k_a(i-k-k_d+m)}{c_i}\quad \text{when}\quad c_i\phi_s\ge k_a\\    
  U_{i,i+r}= \sum_{l=s-r}^{s-1}\phi_l+\frac{(k_a-r)(i-k-k_d+m)}{c_i}, \\ 
  \qquad \text{when} \quad c_i\phi_{s-r}> (i-k-k_d+m), \quad \text{and}\quad{c_i\phi_s}\ge {k_a-r}> {c_i\phi_{s-1}} ,\\
  U_{i,i+r} = (k_a-r-c_{i-1}\phi_s)\phi_{s-r}+\sum_{l=s-r+1}^{s-1}\phi_l+(i-k-k_d+m)\phi_s\\
  \qquad \text{when} \quad c_i\phi_{s-r}\le (i-k-k_d+m) \quad \text{and} \quad {c_i\phi_s}\ge {k_a-r}> {c_i\phi_s}-1\\
  U_{i,i+r} =0 \quad \text{otherwise}
\end{cases}
\end{gather*}
where $c_i= \sum_{j=s}^{m}{\frac{1}{\phi_j}}$.
\end{theorem}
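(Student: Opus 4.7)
The plan is to invoke Corollary~\ref{corollary2}(b) to reduce $v^*$ to a maximum over structural configurations, and then to solve the resulting small LP for each configuration in closed form. Recall that at any optimum $\alpha^*$ there exist $s\in\{1,\dots,k\}$ and $r\in\{0,\dots,s-1\}$ such that $\alpha^*_l\phi_l$ is constant on the top block $\{s,\dots,m\}$, weakly increasing in $l$ on the middle block $\{s-r,\dots,s-1\}$, and zero on $\{1,\dots,s-r-1\}$. With the reindexing $i=k-s+1$, the entry $U_{i,i+r}$ is by definition the optimum of $\sum_{l=1}^{m-k_d}\alpha_l\phi_l$ over all $\alpha$ of that shape, so Corollary~\ref{corollary2} gives $v^*=\max_{i,r}U_{i,i+r}$ and it remains only to solve the restricted LP explicitly.

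For each $(s,r)$ I would parameterize the top block by its common value $t=\alpha_l\phi_l$, so that $\sum_{l=s}^{m}\alpha_l=tc_i$ and the constraint $\alpha_s\le 1$ becomes $t\le\phi_s$. The middle variables are subject to $0\le\alpha_l\le 1$, to the chain $\alpha_{s-r}\phi_{s-r}\le\dots\le\alpha_{s-1}\phi_{s-1}<t$, and to the total mass condition $\sum_{l=s-r}^{s-1}\alpha_l+tc_i=k_a$. The top block contributes $t(m-k_d-s+1)=t(i-k-k_d+m)$ to the objective when $s\le m-k_d$ and nothing when $s>m-k_d$; since $m-k_d-k+i\le 0$ iff $i\le k_a+k_d-m$ (and that range is non-empty only when $k=k_a$), this dichotomy produces the two main cases of the theorem.

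In the first case the objective depends only on the middle block, so $\alpha_l=1$ throughout is optimal, which forces $t=(k_a-r)/c_i$; feasibility $t\le\phi_s$ together with the strict chain $\phi_{s-1}<t$ is exactly $c_i\phi_s\ge k_a-r>c_i\phi_{s-1}$, and the objective evaluates to $\sum_{l=s-r}^{m-k_d}\phi_l$. In the second case I would eliminate $t$ via the mass equation, rewriting the objective as
\begin{equation*}
\frac{k_a(i-k-k_d+m)}{c_i}+\sum_{l=s-r}^{s-1}\alpha_l\!\left(\phi_l-\frac{i-k-k_d+m}{c_i}\right),
\end{equation*}
whose coefficient signs are determined by comparing $c_i\phi_l$ with $i-k-k_d+m$. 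Because $\phi$ is monotone in $l$, this sign changes at most once across the middle block and the smallest $\phi_l$ sits at $l=s-r$, so only two subcases arise. If $c_i\phi_{s-r}>i-k-k_d+m$, all coefficients are positive, the previous argument applies verbatim, and the second formula follows. If $c_i\phi_{s-r}\le i-k-k_d+m$, only the $\alpha_{s-r}$ coefficient is nonpositive, so it is profitable to shrink $\alpha_{s-r}$ and push $t$ upward; the cap $t\le\phi_s$ binds first, and using $c_{i-1}=c_i-1/\phi_s$ the mass equation gives $\alpha_{s-r}=k_a-r-c_{i-1}\phi_s$, with the window $c_i\phi_s\ge k_a-r>c_i\phi_s-1$ matching $0<\alpha_{s-r}\le 1$; substitution produces the third formula. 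Specializing $r=0$ in each case recovers the diagonal entries $U_{i,i}$, and the ``otherwise'' clause records configurations that are infeasible or strictly dominated.

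The main obstacle is the bookkeeping in Case~2: one must confirm that Subcase~2b is exhaustive, which follows from $\phi$ being monotone and the smallest middle value being $\phi_{s-r}$, and one must match every boundary inequality in the statement to a concrete LP constraint so that the stated windows are exactly the regions of validity. Once that is done the formulas of Theorem~\ref{Theorem3} read off directly.
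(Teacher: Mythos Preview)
Your proposal is correct and follows essentially the same route as the paper: both reduce, via Corollary~\ref{corollary2}(b), to solving the restricted LP for each structural pair $(s,r)$ and then split into the same three subcases. The paper phrases the optimality analysis variationally (perturbing $\alpha_l$ against the top block and reading off the sign of $g(\phi_l)=-c_i\phi_l+(m-k_d-s+1)$), whereas you eliminate $t$ through the mass constraint and read off the sign of $\phi_l-(i-k-k_d+m)/c_i$; since these two signs are opposite by construction, the two arguments are the same condition in different clothing. One small wording issue: in your Subcase~2b it is not generally true that \emph{only} the $\alpha_{s-r}$ coefficient is nonpositive, but your conclusion still stands, because once $t$ hits its cap $\phi_s$ the remaining middle mass is maximized by filling from $\alpha_{s-1}$ downward, which forces $\alpha_{s-r+1}=\dots=\alpha_{s-1}=1$ and leaves the slack in $\alpha_{s-r}$ exactly as you compute.
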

\begin{proof}
First, we consider the case $s\leq m-k_d$. Let the optimal solution be $\alpha^*=(\alpha_1^*,\ldots,\alpha_m^*)$. Since $\alpha_i=\alpha_s(\phi_s/\phi_i)$ for $i\geq s$, $\delta\alpha_i=\delta\alpha_s(\phi_s/\phi_i)$ for any perturbation $\delta\alpha_s$. Since $\sum\alpha_i=k_a$, any allowable perturbation around $\alpha^*$ satisfies the following condition:
\begin{equation}
\sum_{j=1}^{m}\delta\alpha_j=0\implies\sum_{j=1}^{s-1}\delta\alpha_j+c_i\phi_s\delta\alpha_s=0,  
\label{perturb}
\end{equation}
where $c_i= \sum_{j=s}^{m}{\frac{1}{\phi_j}}$. Consider a perturbation that involves perturbing $\alpha_l$ for $l<s$ and $\alpha_s,\dots,\alpha_m$. From (\ref{perturb}), we obtain the following:
\begin{equation}\label{perturb1}
\delta\alpha_l=-c_i\phi_s\delta\alpha_s
\end{equation}
Based on the first order necessary conditions for maxima, we obtain the following:
\begin{equation}\label{perturb2}
\delta v|_{v^*}<0\implies\phi_l\delta\alpha_l+\sum_{j=s}^{m-k_d}\phi_s\delta\alpha_s=\phi_l\delta\alpha_l+\phi_s(m-k_d-s+1)\delta\alpha_s<0 
\end{equation}
Let $g(\phi)=-c_i\phi+(m-k_d-s+1)$. Substituting (\ref{perturb1}) in (\ref{perturb2}) leads to the condition $g(\phi_l)\delta\alpha_s<0$. $g(\phi_l)>0\Rightarrow \delta\alpha_s<0\Rightarrow\forall j<s, \alpha_j^*=1$. If for any $l<s$, $g(\phi_l)<0\Rightarrow\delta\alpha_s>0\Rightarrow\alpha_s^*=1$.

As a result, we obtain the following conditions:
\begin{eqnarray}{\label{eqn:alpha*}}
\alpha^*_j&=&1 \quad\text{if}\quad g(\phi_j)>0\\
\alpha^*_s&=&1 \quad\text{if}\quad\exists j \quad\text{such that} \quad \alpha_jg(\phi_j)<0, \nonumber
\end{eqnarray}

From (\ref{eqn:alpha*}), we conclude that $\alpha^{*}$ and $v^{*}$ can have the following forms:
\begin{enumerate}
\item 

\begin{eqnarray}\label{alphaj-offdiagonal}
\alpha_j = \left\{ {\begin{array}{*{20}{c}}
  0  && j=1,\dots,s-r-1\\ 
  1 && j=s-r,\dots,s-1\\
  \frac{k_a-r}{c_i\phi_j} && j=s,\dots,m
\end{array}} \right.  
\end{eqnarray}

From feasibility conditions in~\eqref{reduced LP} (i.e. $\sum_{l=1}^{m}\alpha_l=k_a,\alpha_j\le 1$ and $\alpha_s\phi_s>\alpha_{s-1}\phi_{s-1}$), we conclude that at $(i,i+r)^{th}$ entry of $U$, feasibility conditions are satisfied if ${c_i\phi_s}\ge {k_a-r}>{c_i\phi_{s-1}}$. Substituting~\eqref{alphaj-offdiagonal} in $U_{i,i+r}=\sum^{m-k_d}_{j=1} \alpha_j\phi_j$ leads to the following expression for $U_{i,i+r}$:
\begin{eqnarray}
U_{i,i+r}= \sum_{l=s-r}^{s-1}\phi_l+\frac{(k_a-r)(i-k-k_d+m)}{c_i} 
\end{eqnarray}

\item
\begin{eqnarray}\label{alphaj-offdiagonal-3}
\alpha_j = \left\{ {\begin{array}{*{20}{c}}
  0  && j=1,\dots,s-r-1\\ 
  \delta  && j=s-r\\
  1 && j=s-r+1,\dots,s\\
  \frac{\phi_s}{\phi_j} && j=s+1,\dots,m,
\end{array}} \right. 
\end{eqnarray}
where $\delta=(k_a-r-c_{i-1}\phi_s)$, which results from $\sum_{j=1}^{m}\alpha_j=k_a$. Since $0<\delta \le 1$, $0< k_a-r-c_{i-1}\phi_s\le 1$, which is equivalent to $c_i\phi_s\ge k_a-r>c_i\phi_s-1$.

Moreover, substituting~\eqref{alphaj-offdiagonal-3} in $U_{i,i+r}=\sum^{m-k_d}_{l=1} \alpha_l\phi_l$ leads to the following expression for $v$:
\begin{eqnarray}
U_{i,i+r}=\delta\phi_{s-r} + (i-k-k_d+m)\phi_s+ \sum_{l=s-r+1}^{s-1}\phi_l.
\end{eqnarray} 

\item

 \[\alpha_m\phi_m=\dots=\alpha_{s}\phi_{s}, \quad \alpha_s \neq 0,\quad \alpha_{j}=0\quad \text{for}\quad j\in \{1,\dots, s-1\}\] 
 Substituting the above condition in $U_{i,i+r}=\sum^{m-k_d}_{l=1} \alpha_l\phi_l$, we obtain the following:  
 \begin{eqnarray}
 &&U_{i,i+r} = \sum_{l=s}^{m-k_d}{\alpha_l\phi_l}=(i-k-k_d+m)\alpha_j\phi_j\\
 &&\implies \alpha_j= \frac{U_{i,i+r}}{(i-k-k_d+m)\phi_j},\quad j\in \{s,\dots,m\}\label{diagonalcell-alphaj}
 \end{eqnarray}
By substituting \eqref{diagonalcell-alphaj} into $U_{i,i+r}=\sum^{m-k_d}_{l=1} \alpha_lv_l$, we obtain the following:
 \begin{equation}
 \sum_{j=s}^{m}{\frac{U_{i,i+r}}{(i-k-k_d+m)\phi_j}}=k_a\implies U_{i,i+r} = \frac{k_a(i-k-k_d+m)}{\sum_{j=s}^{m}{\frac{1}{\phi_j}}}
 \end{equation}
 Let $c_i=\sum_{j=s}^{m}{\frac{1}{\phi_j}}$. Next, we have to check whether $\alpha$ satisfies the feasibility conditions of~\eqref{reduced LP}. Substituting $U_{i,i+r}$ in \eqref{diagonalcell-alphaj} leads to the following: 
 \begin{eqnarray}\nonumber
 \alpha_j = \left\{ {\begin{array}{*{20}{c}}
   \frac{k_a}{\phi_jc_i}  && j\in \{s,\dots,m\}\\ 
   0 && j\in \{1,\dots,s-1\}
 \end{array}} \right.  
 \end{eqnarray}
\end{enumerate}

Finally, we consider the case when $k_a+k_d>{m}$. Since $U_{i,i+r} = \sum_{l=s-r}^{m-k_d}\alpha_l\phi_l$ and $\alpha_{j} = 0$ for $j=1,\dots,m-k_d$, for $i=1,\dots, k_a+k_d-m$, $U_{i,i} = 0$. Moreover, $U_{i,i+r}$ can be written as $U_{i,i+r}=\sum_{l=s-r}^{m-k_d}{\alpha_l\phi_l}$, and the feasible $\alpha$'s are given as follows:
\begin{eqnarray}
\alpha_j = \left\{ {\begin{array}{*{20}{c}}
  0  && j=1,\dots,s-r-1\\ 
  1 && j=s-r,\dots,s-1\\
  \frac{k_a-r}{c_i\phi_j} && j=s,\dots,m
\end{array}} \right.  
\end{eqnarray}
If ${c_i\phi_s}\ge {k_a-r}>{c_i\phi_{s-1}}$, then $(i,i+r)^{\text{th}}$ entry of $U$ is feasible. For all $i>k_a+k_d-m$, the arguments are same as for the case $k_a \le m-k_d$.

Since $v^*$ is the maximum value which satisfies all feasibility conditions, $v^*$ is the maximum entry of $U$.
\qed

\end{proof}

Next, we show that $U$ is a sparse matrix, which leads to a linear time algorithm for computing $v^*$. Let $U^I$ and $U^{II}$ be square matrices of dimension $k$ defined as follows:

\begin{eqnarray}\label{UI condition}
U_{i,i+r}^I = \left\{ {\begin{array}{*{20}{c}}
  U_{i,i+r}  && c_i\phi_{s-r}> (i-k-k_d+m),  {c_i\phi_s}\ge {k_a-r}> {c_i\phi_{s-1}},\\ 
  0 && \text{otherwise}
\end{array}} \right.  ,
\end{eqnarray}
\begin{eqnarray}\label{UII condition-2}
U_{i,i+r}^{II} = \left\{ {\begin{array}{*{20}{c}}
  U_{i,i+r}  && c_i\phi_{s-r}\le (i-k-k_d+m), {c_i\phi_s}\ge {k_a-r}> {c_i\phi_{s}-1}\\ 
  0 && \text{otherwise}
\end{array}} \right. 
\end{eqnarray}

\begin{lemma}\label{lemma3}
Given an infeasible cell in $U^I$, either all the cells to the right (in the same row) or all the cells below (in the same column) are infeasible.
\end{lemma}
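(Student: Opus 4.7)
\noindent\textit{Proof plan.} The plan is to case-split on which of the two feasibility conditions in~\eqref{UI condition} is violated at the infeasible cell $(i,i+r)$, and to verify that each distinct failure mode propagates monotonically along a single direction (row or column). Recall that $s=k-i+1$ and $c_i=\sum_{j=s}^m 1/\phi_j$.

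First I would dispatch the case in which the first condition $c_i\phi_{s-r} > (i-k-k_d+m)$ fails. Moving to any cell $(i,i+r')$ with $r'>r$ in the same row leaves $c_i$ and $(i-k-k_d+m)$ unchanged while replacing $\phi_{s-r}$ by $\phi_{s-r'}\le \phi_{s-r}$ (since the costs are non-decreasing in the index). Hence the first condition remains violated, and every cell to the right is infeasible.

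For the second condition $c_i\phi_s \ge k_a-r > c_i\phi_{s-1}$, I would split its violation into an upper and a lower failure. If the lower inequality fails, i.e.\ $k_a-r \le c_i\phi_{s-1}$, then for any $r'>r$ in the same row we have $k_a-r' < k_a-r \le c_i\phi_{s-1}$, so the lower inequality remains violated; again all cells to the right are infeasible. The delicate subcase is failure of only the upper inequality, $c_i\phi_s < k_a-r$. Here I would argue that infeasibility propagates downward by induction on the row index. Moving from $(i,i+r)$ to $(i+1,i+r)$, the parameters transform as $s\mapsto s-1$, $r\mapsto r-1$, $c_i\mapsto c_{i+1}=c_i+1/\phi_{s-1}$, and $k_a-r\mapsto k_a-r+1$. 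A short computation then gives $c_{i+1}\phi_{s-1}=c_i\phi_{s-1}+1\le c_i\phi_s+1<(k_a-r)+1$, so the upper inequality still fails at $(i+1,i+r)$, and iterating this one-step argument carries the conclusion down the entire column.

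The main obstacle will be to correctly identify which failure mode propagates in which direction, since $c_i$, $s$, and $k_a-r$ all shift simultaneously as we traverse $U^I$. Once the correct monotone direction is pinned down for each of the three failure modes (first condition, lower half of the second, upper half of the second), the inequalities involved are elementary, and assembling the three cases yields the dichotomy claimed in the lemma; any cell for which both ``all right'' and ``all below'' might be expected (e.g.\ when both the first condition and the upper half of the second fail simultaneously) just satisfies the conclusion in the stronger, inclusive sense.
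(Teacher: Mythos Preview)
Your proposal is correct and follows essentially the same three-case split as the paper: violation of the lower half $k_a-r\le c_i\phi_{s-1}$ propagates rightward, violation of the upper half $c_i\phi_s<k_a-r$ propagates downward via the identity $c_{i+1}\phi_{s-1}=c_i\phi_{s-1}+1$, and violation of the first condition is the remaining case.

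The one genuine difference is how you handle the first-condition failure $c_i\phi_{s-r}\le i-k-k_d+m$. You observe that $\phi_{s-r'}\le\phi_{s-r}$ for $r'>r$ and conclude directly that all cells to the \emph{right} are infeasible. The paper instead treats the case where \emph{only} the first condition fails, so that the lower inequality $k_a-r>c_i\phi_{s-1}$ still holds; it then deduces $k_a-r+1>c_{i+1}\phi_{s-1}$, i.e.\ the upper inequality fails one row \emph{down}, and invokes the upper-failure case to kill the rest of the column. Your route is more direct for Lemma~\ref{lemma3} itself. The paper's detour, however, is what yields Corollary~\ref{corr3} (at most one feasible cell per column): the same computation shows that immediately below any \emph{feasible} cell the upper inequality must fail. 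If you keep your version of the lemma, you would need to supply that one-line observation separately to recover the corollary.
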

\begin{proof} Consider an infeasible cell $(i,i+r)$ in $U^I$. For a cell to be infeasible, at least one of the three inequalities in~(\ref{UI condition}) needs to be violated.
\begin{enumerate}[label=(\alph*)]
\item First, consider the case $c_i\phi_{s-1}\ge k_a-r\Rightarrow c_i\phi_{s-1}\ge k_a-r', \quad\forall r'\ge r$. In other words, if $c_i\phi_{s-1}\ge k_a-r$, there is no feasible solution in $(i,i+r')^\text{th}$ entry of $U^I$ for all $r'\ge r$. 
\item Next, consider the case, $ k_a-r > c_i\phi_{s}$. Since $c_{i+1}\phi_{s-1}=c_i\phi_{s-1}+1$, $ k_a-r+1 > c_{i+1}\phi_{s-1}\Rightarrow (i+1,i+r)^\text{th}$ entry of $U^I$ cannot be feasible. Since $i$ is arbitrary, we can conclude that $(i+j,i+r)^\text{th}$ entry of $U^I$ cannot be feasible for all $j\ge 1$. 
\item Finally, consider the case in which the inequality $c_i\phi_{s-r}> (i-k-k_d+m)$ is the only one that is violated at $(i,i+r)^\text{th}$ entry of $U^I$. Therefore, $ c_i\phi_{s}\ge k_a-r > c_i\phi_{s-1}\Rightarrow k_a-r+1 >c_{i+1}\phi_{s-1}$, and consequently, there is no feasible solution in $(i+j,i+r)^\text{th}$ entry of $U^I$ for all $j\ge 1$. Therefore, any column of $U^I$ contains at most one feasible (non-zero) entry.
\end{enumerate}
\end{proof}

\begin{corollary}\label{corr3}
At most one cell in a column of $U^I$ is feasible.
\end{corollary}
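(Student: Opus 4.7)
My plan is to show the stronger statement: if $(i,i+r)$ is a feasible cell of $U^I$, then every cell $(i+t,i+r)$ with $t\ge 1$ in the same column is infeasible. The corollary is then immediate.

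First, fix a feasible cell $(i,i+r)$ of $U^I$, and write $s=k-i+1$ as in (\ref{UI condition}). Feasibility supplies the three inequalities of (\ref{UI condition}); I will only need the strict upper bound $k_a-r>c_i\phi_{s-1}$. Second, look at the cell directly below, $(i+1,i+r)$. Moving down one row shifts $s\mapsto s-1$ and $r\mapsto r-1$ (since the column index $i+r=(i+1)+(r-1)$ is fixed), so the lower-bound feasibility requirement for this new cell reads $c_{i+1}\phi_{s-1}\ge k_a-r+1$. Now invoke the identity $c_{i+1}\phi_{s-1}=c_i\phi_{s-1}+1$ already established in case (b) of the proof of Lemma~\ref{lemma3}: combining it with the hypothesis $c_i\phi_{s-1}<k_a-r$ gives $c_{i+1}\phi_{s-1}<k_a-r+1$, which violates the lower bound and hence makes $(i+1,i+r)$ infeasible.

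Third, I will propagate the infeasibility down the rest of the column. The violation just produced is exactly the type $k_a-r'>c_{i'}\phi_{s'}$ treated in case (b) of Lemma~\ref{lemma3}, so that lemma's argument (used with $i$ replaced by $i+1$) forces $(i+t,i+r)$ to be infeasible for every $t\ge 2$ as well. Alternatively, and more transparently, I can verify the inequality directly by induction: writing $c_{i+t}=c_i+\sum_{l=1}^{t}1/\phi_{s-l}$ and using $\phi_{s-t}\le\phi_{s-l}$ for $l\le t$ together with $c_i\phi_{s-t}\le c_i\phi_{s-1}<k_a-r$, one obtains $c_{i+t}\phi_{s-t}<k_a-r+t$, which again violates the lower-bound feasibility condition at row $i+t$. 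Either route concludes the step.

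Combining these observations, the uppermost feasible cell in any column is followed only by infeasible cells, so the column contains at most one feasible entry, proving Corollary~\ref{corr3}. There is no serious obstacle here — essentially all the arithmetic is already performed inside Lemma~\ref{lemma3} — and the only point requiring care is the bookkeeping of the $(s,r)$ shifts as one moves down a column.
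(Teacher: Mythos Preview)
Your proof is correct and follows essentially the same route as the paper: starting from a feasible cell you use the inequality $k_a-r>c_i\phi_{s-1}$ together with the identity $c_{i+1}\phi_{s-1}=c_i\phi_{s-1}+1$ to kill the cell immediately below, then propagate via case~(b) of Lemma~\ref{lemma3}. The paper's own proof is just the one-line pointer ``follows directly from the arguments for Lemma~\ref{lemma3}(c),'' which is exactly the arithmetic you have written out; your added direct-induction alternative is a harmless bonus.
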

\begin{proof}
The proof follows directly from the arguments for Lemma~\ref{lemma3} (c). 
\end{proof}

\begin{theorem} \label{linear time attacker}
$v^*$, $\alpha^*$ can be computed in ${\cal{O}}(k)$ time.
\end{theorem}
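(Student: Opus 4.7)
The plan is to exploit the sparsity of $U = U^I + U^{II}$: I will show that $U$ has only $O(k)$ nonzero cells, each computable in $O(1)$ after $O(k)$ preprocessing. For sparsity, Corollary~\ref{corr3} already guarantees at most one nonzero per column of $U^I$, bounding its nonzeros by $k$. For $U^{II}$, the defining inequality $c_i\phi_s \ge k_a - r > c_i\phi_s - 1$ confines $r$ to a half-open interval of length one, hence at most one integer candidate $r_i = \lceil k_a - c_i\phi_s\rceil$ per row, giving another $k$. In total $U$ has $O(k)$ nonzero entries.

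Next, I would precompute in $O(k)$ time the sequence $c_1,\dots,c_k$ via the recurrence $c_{i+1} = c_i + 1/\phi_{k-i}$, and the prefix sums $\Phi_l = \sum_{j=1}^{l}\phi_j$; with these, every closed-form expression for $U_{i,i+r}$ appearing in Theorem~\ref{Theorem3} collapses to an $O(1)$ arithmetic evaluation. For $U^{II}$, a single linear scan over rows $i = 1,\dots,k$ computes the unique candidate $r_i$, checks the remaining condition $c_i\phi_{s-r_i} \le i - k - k_d + m$, and evaluates $U_{i,i+r_i}$ whenever feasible, costing $O(k)$ total. For $U^I$, I would scan rows top-to-bottom while maintaining a doubly-linked list of ``live'' columns, traversing left-to-right within each row. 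By Lemma~\ref{lemma3}, an infeasibility at $(i,i+r)$ is either of case~(a), which kills the rest of row $i$ (advance to row $i+1$), or of cases~(b)/(c), which kill column $i+r$ (delete from the live list). Each of the $k$ columns is deleted at most once, each row exits at most once, and each of the $O(k)$ feasible cells is visited exactly once, so the amortized cost is $O(k)$.

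Finally, $v^{*}$ is the maximum over all visited entries, and $\alpha^{*}$ is read off from the corresponding formula in Theorem~\ref{Theorem3} at the optimal $(i,r)$, both in $O(k)$. The main obstacle will be formalizing the $U^I$ amortized analysis: one must verify that every source of infeasibility inside $U^I$ is indeed captured by one of Lemma~\ref{lemma3}'s three cases (so that no nonzero cell is accidentally skipped by the column-deletion rule) and justify the constant-amortized cost of the linked-list skip-and-delete operations. The analogous argument for $U^{II}$ is immediate once the one-$r$-per-row bound is established, and the overall bound $\mathcal{O}(k)$ follows by summing the three constant-factor $O(k)$ phases.
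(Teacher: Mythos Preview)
Your proposal is correct and follows essentially the same strategy as the paper: decompose $U$ into $U^I$ and $U^{II}$, use the unit-length interval constraint on $r$ to bound $U^{II}$ to at most one feasible cell per row, and invoke Lemma~\ref{lemma3}/Corollary~\ref{corr3} to traverse $U^I$ in $O(k)$. Your linked-list column-deletion scan for $U^I$ is a slightly more elaborate implementation of the paper's simpler staircase walk (from cell $(i,j)$ move to either $(i+1,j)$ or $(i,j+1)$), and you are more explicit than the paper about the $O(k)$ preprocessing of $c_i$ and prefix sums needed for $O(1)$ per-cell evaluation.
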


\begin{proof}
   
From Lemma~\ref{lemma3} and Corollary~\ref{corr3}, we can conclude that from a current cell $(i,j)$ in $U^I$, one needs to search either in cell $(i+1,j)$ or cell $(i,j+1)$ to find the next feasible element. Therefore, a linear search (${\cal{O}}(k)$) that alternates between rows and columns leads to the cell containing the maximum element.

Next, we show that all feasible entries in $U^{II}$ can be computed in ${\cal{O}}(k)$ time. For each row $i$, there is at most one $r$ which satisfies ${c_i\phi_s}\ge {k_a-r}> {c_i\phi_{s}-1}$ in~\eqref{UII condition-2}. Therefore, for each row in $U^{II}$, we can find the feasible cell in constant time. This implies that all feasible entries in $U^{II}$ can be computed in ${\cal{O}}(k)$ time, and a linear or a logarithmic search among the feasible entries provides the maximum element. 
\qed
\end{proof}

Algorithm~1 gives $v^*,\alpha^*$  and active targets for the attacker and the defender in linear time.

\begin{algorithm}
    \caption{Computation of the value, and active targets}
    \begin{algorithmic}[1] 

    \State \textbf{Input}: $\phi_{1},\dots,\phi_m$ and $k_a, k_d$
    \State \textbf{Output}: ${v}^*,\alpha^*,{\cal{U}}_a,{\cal{U}}_d$
    \State Construct $U$ based on Theorem~\ref{Theorem3} and Theorem~\ref{linear time attacker}.

    \State $i_1 \leftarrow 1 $
    \For{\texttt{$j=1:m-k_d$}}
    	\For{\texttt{$i=i_1:j$}}
			\If {${c_i\phi_s}\ge {k_a-r}> {c_i\phi_{s-1}}$}
				\If {$c_i\phi-{s-r}>i-k-k_d+m$}
					\State $U_{i,i+r}= \sum_{l=s-r}^{s-1}\phi_l+\frac{(k_a-r)i-k-k_d+m}{c_i},$
				\EndIf
				\State $i_1 \leftarrow i$
				\State \Return $i$
			\ElsIf {${k_a-r}> {c_i\phi_s}$}
			    \State $i_1 \leftarrow i$
			    \State \Return $i$
			\Else
			\State $U_{i,i+r}=0$
			\State $i_1 \leftarrow i$
			\EndIf    
    	\EndFor
    \EndFor 

	\For{\texttt{$i=1:k$}}
		\State find $r$ such that $c_i\phi_s \ge k_a-r>c_i\phi_s-1$
    		\If {$c_i\phi_{s-r}\le i-k-k_d+m$,  and ${c_i\phi_s}\ge {k_a-r}> {c_i\phi_s}-1$}
    				    \State $U_{i,i+r} = (k_a-r-c_{i-1}\phi_s)\phi_{s-r}+\sum_{l=s-r+1}^{s-1}\phi_l+(i-k-k_d+m)\phi_s$
    				\Else
    				\State $U_{i,i+r}=0$
    				\EndIf
    	    \EndFor

    \State $v^*\leftarrow \max U_{i,j}$  
    \State $(i^*,j^*)\leftarrow \arg \max U_{i,j}$
    \State $ {\cal{U}}_a \leftarrow \{k-j^*+1,\dots,m\} $
    \State $ {\cal{U}}_d \leftarrow \{k-i^*+1,\dots,m\} $
    \end{algorithmic}
\end{algorithm}

\section{Dual Analysis: Structural Properties of the Defender's Strategy and Algorithms}
In this section, we present structural results for the optimal strategy of the defender, and present an $\mathcal{O}(m)$ algorithm to compute $v^*$ and its corresponding optimal strategy. From the definition of $v^*$, we obtain the following:
\begin{eqnarray*}
v^*=\underset{q}{\text{min}}\underset{1\le j\le n_a}{\text{max}}(Aq)_j.
\end{eqnarray*} 
where $(Aq)_j$ denote the $j^\text{th}$ element of $Aq$. From \eqref{game matrix elements}, $(Aq)_j$ can be written in the following form, 
\begin{eqnarray*}
(Aq)_j = \sum_{i=1}^{n_d}{q_ia_{ji}}=\sum_{i=1}^{n_d}q_i\sum_{l\in{x}_j\cap{{y}_i}^c}^{}\phi_l
=\sum_{l\in {x}_j}^{}{\beta_l\phi_l},
\end{eqnarray*}
where, 
\begin{eqnarray}
\beta_j = \sum_{\{i|j\in{{y}_i}^c\}}^{}{q_i} \label{beta}
\implies \beta = M_{\left[ m,m-k_d\right]}q,
\end{eqnarray}
where $\beta = [\beta_1,\dots,\beta_m]^T$, and $M_{\left[ m,(m-k_d)\right]}\in \mathbb{R}^{m\times n_d}$ is a {\it combinatorial matrix}. 
Since $M_{\left[ m,m-k_d\right]}$ is a combinatorial matrix, $\sum_{i=1}^{n_d}\beta_i = m-k_d$. Moreover, from lemma~\ref{lemma5}, for any feasible $\beta$ there exists a feasible $q$. 

The following lemma provides the structure of $\beta^*$.

\begin{lemma} \label{lemma 8}
$\beta^*$ satisfies one of the following conditions:
\begin{enumerate} [label=(\alph*)]
\item $\phi_{s-r}\ge \beta_s^*\phi_s=\dots=\beta_m^*\phi_m\ge\phi_{s-r-1}$, and $\beta_1^*=\dots=\beta_{s-1}^*=1$,
\item $\phi_{s-r}\ge \beta_s^*\phi_s=\dots=\beta_m^*\phi_m = \phi_{s-r-1}$, and $\beta_{s-1}^*\phi_{s-1}\ge\beta_s^*\phi_s$,\\ and $\beta_1^*=\dots=\beta_{s-2}^*=1$,\\
where $1\le s \le m$, $0\le r \le s-1$, $r+1\le k_a\le r+m-s$.
\end{enumerate}
\end{lemma}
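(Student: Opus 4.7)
The plan is to mirror the variational argument used for the attacker in Lemma~\ref{Theorem2} and Corollary~\ref{corollary2}, now applied to the dual (defender's) problem. By Lemma~\ref{lemma5} (with parameter $m-k_d$ in place of $k_a$), every vector $\beta$ satisfying $\sum_l \beta_l = m-k_d$ and $0\le\beta_l\le 1$ is induced by some mixed strategy $q$, so it suffices to work in $\beta$-space. The value of the game can be written as
\[
v^* \;=\; \min_\beta f(\beta), \qquad f(\beta) := \max_{|S|=k_a}\sum_{l\in S}\beta_l\phi_l,
\]
namely the sum of the $k_a$ largest of $\{\beta_l\phi_l\}$, which is convex in $\beta$ and is to be minimized over a polytope.

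I would first establish the saturation block $\beta_1^* = \cdots = \beta_{s-1}^* = 1$ for some $s$. For any pair $i>j$ with $\beta_i^*>0$ and $\beta_j^*<1$, the perturbation $e_j-e_i$ is feasible; evaluating the directional derivative of $f$ at $\beta^*$ along $e_j-e_i$ through the active top-$k_a$ set $S^*$ (exactly as in the proof of Lemma~\ref{Theorem2}) together with $\phi_i\ge\phi_j$ forces $i\notin S^*$. Combined with the ordering $\phi_1\le\cdots\le\phi_m$ this yields a cut-off index $s$ beneath which every $\beta_l^*$ equals $1$. I would then derive tail equalization: if $\beta_{l_1}^*\phi_{l_1}>\beta_{l_2}^*\phi_{l_2}$ for some $l_1,l_2\ge s$, the feasible perturbation $e_{l_2}-e_{l_1}$ would be a strict descent direction for $f$, contradicting the minimality of $\beta^*$. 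Hence the tail values share a common $\gamma$, and the constraint $\sum \beta_l^* = m-k_d$ pins it down to $\gamma = (m-k_d-s+1)/c_s$ with $c_s = \sum_{l=s}^{m} 1/\phi_l$, so $\beta_l^* = \gamma/\phi_l$ for $l\ge s$.

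Finally, I would identify $r$ and separate the two cases. Let $r$ denote the number of saturated indices $l<s$ with $\phi_l\ge\gamma$; by monotonicity of $\phi$ these are precisely $\{s-r,\ldots,s-1\}$, giving the chain $\phi_{s-r}\ge\gamma\ge\phi_{s-r-1}$. The active top-$k_a$ set consists of these $r$ saturated indices together with $k_a-r$ of the $\gamma$-valued tail indices, and for $\gamma$ to be a genuine threshold (some tail index inside and at least one outside $S^*$) we obtain the admissibility range $r+1\le k_a\le r+m-s$. If $\gamma>\phi_{s-r-1}$ strictly, the structure is rigid and yields case~(a). If instead $\gamma=\phi_{s-r-1}$, the first-order conditions acquire an extra direction of freedom at the index $s-1$: mass can be traded between $\beta_{s-1}^*$ and the tail without altering $v^*$, producing the one-parameter family $\beta_{s-1}^*\in[\gamma/\phi_{s-1},\,1]$ with $\beta_1^*=\cdots=\beta_{s-2}^*=1$, which is exactly case~(b). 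A short direct calculation then verifies that $v^*$ is indeed constant along this ray.

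The main obstacle will be the equalization step: because $f$ is only piecewise linear, the directional derivative must be taken through the subdifferential, using weights $w_l\in[0,1]$ with $\sum_l w_l = k_a$ that select which tied values $u_l=\gamma$ enter the top-$k_a$. Picking the correct selector $w$ for a given perturbation is what converts ``no descent direction'' into the required equality $\beta_{l_1}^*\phi_{l_1}=\beta_{l_2}^*\phi_{l_2}$; moreover, the two admissible selectors that arise when the threshold $\gamma$ coincides with some $\phi_{s-r-1}$ are precisely what produce the two distinct regimes~(a) and~(b) of the lemma.
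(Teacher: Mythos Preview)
Your proposal is correct and follows essentially the same variational/perturbation approach as the paper: both argue via feasible directions $e_i-e_j$ in $\beta$-space and first-order optimality of the top-$k_a$ sum, and both recognise case~(b) as the degenerate situation produced by a tie at the threshold. The only real difference is the order of deductions: the paper first fixes the sorted sequence $\beta_{i_1}^*\phi_{i_1}\ge\cdots\ge\beta_{i_m}^*\phi_{i_m}$, proves $\beta_{i_{k_a}}^*\phi_{i_{k_a}}=\beta_m^*\phi_m$, then shows at most one unsaturated index can have a product different from $\beta_m^*\phi_m$, and finally reads off the two structures from the set $\Gamma=\{i:\beta_i^*\phi_i=\beta_m^*\phi_m\}$; you instead establish the saturation block first, then the tail equalisation, and obtain case~(b) as a one-parameter family emanating from case~(a). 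Your viewpoint makes the geometry of the optimal face (and hence the origin of case~(b)) more transparent, while the paper's ordering avoids having to argue separately that every optimum, not just some optimum, lies in the family you describe.
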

\begin{proof}
Let the sequence $\{i_1,\dots,i_m\}$ of indices satisfy the following condition: 
\begin{eqnarray}
\beta_{i_1}^*\phi_{i_1}\ge \dots \ge \beta_{i_m}^*\phi_{i_m}
\end{eqnarray}
Note that $v^* = \sum_{l=i_1}^{i_{k_a}}{\beta_l^*\phi_l} $. 

First, we show that $\beta_{i_{k_a}}^*\phi_{i_{k_a}} = \beta_{i_{k_a+1}}^*\phi_{i_{k_a+1}}$. Assume 1) $\beta_{i_{k_a}}^*\phi_{i_{k_a}} > \beta_{i_{k_a+1}}^*\phi_{i_{k_a+1}}$ 2) there exists an $i\in\{i_{k_a},\dots,i_m\}$ such that $\beta_i^* <1$. Since $(e_{i}-e_{i_{k_a}})^T\nabla_{\beta}v|_{v^*}<0$ at $v^{*}$, we arrive at a contradiction. Now, assume $\beta_{i_{k_a+1}}^*=\dots=\beta_{i_m}^*=1$. Since $\sum_{l=1}^{m}{\beta_l}=m-k_d$ and $k_d\ge 1$, there exist $i,j\in\{i_1,\dots,i_{k_a}\}, i>j$ such that $\beta_i^*,\beta_j^* <1 $. Therefore, $(e_{j}-e_{i})^T\nabla_{\beta}v|_{v^*}<0$, and we arrive at a contradiction. Therefore,  $\beta_{i_{k_a}}^*\phi_{i_{k_a}} = \beta_{i_{k_a+1}}^*\phi_{i_{k_a+1}}$.  In a similar manner, we can show that $\beta_m^*\phi_m = \beta_{i_{k_a}}^*\phi_{i_{k_a}}$. 


Next, we prove that $\forall i$ such that $\beta_i^*\phi_i \neq \beta_m^*\phi_m$, there is at most one $\beta_j^* <1$ and $\beta_j^*\phi_j > \beta_m^*\phi_m$, and the rest of $\beta_i^*$'s are 1. Assume that there are $\beta_j^*<1, \beta_k^*<1$ such that $\beta_j^*\phi_j \neq \beta_m^*\phi_m$ and $\beta_k^*\phi_k \neq \beta_m^*\phi_m$. If $\beta_k^*\phi_k, \beta_j^*\phi_j > \beta_m^*\phi_m$, and $j>k $, then $(e_{k}-e_{j})^T\nabla_{\beta}v|_{v^*}<0$, and we arrive at a contradiction. Now, assume that $\beta_j^*\phi_j < \beta_m^*\phi_m$, and $\beta_j^* <1$, therefore $(e_{j}-e_{i})^T\nabla_{\beta}v|_{v^*}<0$ for all $i$ such that $\beta_i^*\phi_i > \beta_j^*\phi_j$, which leads to a contradiction. 

Next, we prove that $\beta^*$ always satisfies one of the conditions in the Lemma. Let $\Gamma = \{i|\beta_i^*\phi_i=\beta_m^*\phi_m\}$. First, we prove that $\forall j\in \Gamma$ if $\beta_j^* <1$ then $j+1 \in \Gamma$. To begin with, we assume that $\beta_j^* <1, \beta_j^*\phi_j=\beta_m^*\phi_m$ and $j+1 \notin \Gamma$. If $\beta_{j+1}^*\phi_{j+1}>\beta_j^*\phi_j$, $(e_{j}-e_{j+1})^T\nabla_{\beta}v|_{v^*}<0$, which leads to a contradiction. Moreover, if $\beta_{j+1}^*\phi_{j+1}<\beta_j^*\phi_j$ then $\beta_{j+1}^* <1$ and $(e_{j+1}-e_{i})^T\nabla_{\beta}v|_{v^*}<0$ for all $i$ such that $\beta_{j+1}^*\phi_{j+1}<\beta_i^*\phi_i$. This completes the proof for the first structure in the Lemma. Let $j= \min(\Gamma)$ and $\beta_j* =1$. Therefore, for any $i\in \Gamma$, either $\beta_i^*<1\Rightarrow i+1\in \Gamma$ or $\beta_i^*=1,\phi_j=\phi_i$. The last condition leads to the second structure in the Lemma.
\qed 
\end{proof}

Similar to the analysis for the attacker, from the above Lemma, we can compute $v^*$ and $\beta^*$ by examining all possible solutions which satisfy conditions (a) or (b) in Lemma~\ref{lemma 8}. Let $W$ be a square matrix of dimension $m$. 


\begin{theorem}
$v^*=\underset{i,j}{\text{min}}\{{W_{i,j}}\}$, where entries of $W$ are defined as follows:

\begin{gather*}    
\begin{cases}
  W_{i,i+r}=\frac{(k_a-r)(i-k_d)}{c_i}+\sum_{l=s-r}^{s-1}\phi_l,\\  
  \qquad \text{for} \quad   s-1\ge r \ge 0, r+m-s\ge k_a \ge r+1, c_i\phi_{s-r} \ge i-k_d \ge c_i\phi_{s-r-1},\\
  W_{i,i+r}=(i-k_d+1-c_i\phi_{s-r-1}) \phi_{s-1} + (k_a-r)\phi_{s-r-1}+ \sum_{l=s-r}^{s-2}\phi_l,\\
  \qquad \text{for}\quad s-1\ge r \ge 0, r+m-s\ge k_a \ge r+1, \\
  \qquad c_i\phi_{s-r-1}+1 > i-k_d +1  \ge c_{i+1}\phi_{s-r-1}\\
  W_{i,i+r}=+\infty,\quad \text{otherwise}
\end{cases}
\end{gather*}

where, $c_i = \sum_{j=s}^{m}\frac{1}{\phi_j}$.
\end{theorem}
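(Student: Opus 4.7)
The plan is to parallel the derivation of Theorem~3 on the defender's side. Lemma~\ref{lemma 8} characterizes $\beta^*$ as one of two parametric forms indexed by $(s, r)$ with $1 \le s \le m$, $0 \le r \le s-1$, and $r+1 \le k_a \le r + m - s$. For each admissible pair I will (i) solve for the unknown components of $\beta^*$ by imposing the marginal constraint $\sum_l \beta_l = m - k_d$, (ii) evaluate $v = \sum$ of the $k_a$ largest $\beta_l \phi_l$, and (iii) translate the box constraints $0 \le \beta_l \le 1$ together with the orderings in Lemma~\ref{lemma 8} into the inequalities that define feasibility in the theorem. Since $v^*$ is the minimum over the defender's strategies, the identity $v^* = \min_{i,j} W_{i,j}$ will follow once every feasible structural form has been accounted for; the identification $s = m - i + 1$ matches $c_i = \sum_{j=s}^m 1/\phi_j$, and the surjectivity of $M_{[m, m-k_d]}$ (Lemma~\ref{lemma5}) guarantees that any admissible $\beta$ is induced by a genuine mixed strategy $q$.

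In Case (a) of Lemma~\ref{lemma 8}, set $\beta_j = 1$ for $j \le s-1$ and $\beta_j = \beta_s \phi_s / \phi_j$ for $j \ge s$. The marginal constraint yields $\beta_s \phi_s = (m - k_d - s + 1)/c_i = (i - k_d)/c_i$. By the lemma's ordering $\phi_{s-r} \ge \beta_s \phi_s \ge \phi_{s-r-1}$, the $k_a$ largest values of $\beta_l \phi_l$ are $\phi_{s-r}, \ldots, \phi_{s-1}$ together with $k_a - r$ copies of the common value $\beta_s \phi_s$. Summing these reproduces the first branch of $W_{i,i+r}$. The feasibility requirements $0 \le \beta_s \le 1$ combined with the placement $\phi_{s-r} \ge \beta_s \phi_s \ge \phi_{s-r-1}$ collapse to the stated condition $c_i \phi_{s-r} \ge i - k_d \ge c_i \phi_{s-r-1}$.

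Case (b) is the boundary configuration where $\beta_s \phi_s$ is pinned to $\phi_{s-r-1}$ and $\beta_{s-1}$ becomes the only sub-unit free component, with $\beta_1 = \cdots = \beta_{s-2} = 1$. The marginal constraint then gives $\beta_{s-1} = (i - k_d + 1) - c_i \phi_{s-r-1}$. The top $k_a$ values of $\beta_l \phi_l$ now consist of $\phi_{s-r}, \ldots, \phi_{s-2}$ (each exceeds $\phi_{s-r-1}$), the sub-unit term $\beta_{s-1}\phi_{s-1}$ (which is $\ge \phi_{s-r-1}$ by the lemma), and $k_a - r$ copies of $\phi_{s-r-1}$; substituting the formula for $\beta_{s-1}$ and collecting terms produces the second branch of $W_{i,i+r}$. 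The constraint $\beta_{s-1}\le 1$ reduces to $c_i\phi_{s-r-1}+1 > i-k_d+1$, while $\beta_{s-1}\phi_{s-1} \ge \phi_{s-r-1}$ rearranges to $i-k_d+1 \ge c_{i+1}\phi_{s-r-1}$ using $c_{i+1} = c_i + 1/\phi_{s-1}$; together these are precisely the theorem's feasibility inequalities.

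Finally, pairs $(s, r)$ that fail both feasibility sets correspond to no admissible $\beta$, so the corresponding entries are set to $+\infty$ and cannot attain the minimum. Conversely, every optimal $\beta^*$ falls into one of the two structural forms by Lemma~\ref{lemma 8}, so the minimum feasible entry of $W$ coincides with $v^*$. The main technical obstacle is Case (b): one must verify that $\beta_{s-1}\phi_{s-1}$ really does sit among the top $k_a$ values of $\beta_l\phi_l$ (using the lemma's inequality $\beta_{s-1}\phi_{s-1} \ge \beta_s\phi_s$ together with the monotonicity of $\phi_l$), and correctly rewrite the two-sided box constraint on $\beta_{s-1}$ as the crossing conditions involving both $c_i$ and $c_{i+1}$. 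Once this is done, the minimum over all $W_{i,i+r}$ is attained at the unique feasible structural form that Lemma~\ref{lemma 8} pinpoints, giving the claimed characterization.
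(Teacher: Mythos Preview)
Your proposal is correct and follows essentially the same approach as the paper: both proofs split according to the two structural forms of Lemma~\ref{lemma 8}, solve for the free components of $\beta$ from the marginal constraint $\sum_l\beta_l=m-k_d$, substitute into the $k_a$-largest sum to obtain the two expressions for $W_{i,i+r}$, and then translate the box/ordering constraints on $\beta$ into the stated feasibility inequalities (including your observation $c_{i+1}=c_i+1/\phi_{s-1}$ in Case~(b)). The only cosmetic slip is that in Case~(b) the relevant constraint is the strict inequality $\beta_{s-1}<1$ (not $\le$), which is what actually yields the strict bound $c_i\phi_{s-r-1}+1>i-k_d+1$ you state.
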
  
\begin{proof}
First case corresponds to the structure (a) in Lemma~\ref{lemma 8}. In this case, $\beta_1=\dots=\beta_{s-1}=1$. Since $\sum_{l=1}^{m}\beta_l=m-k_d$, and $\beta_s\phi_s=\dots=\beta_m\phi_m$, we obtain the following expression for $\beta_j$
\begin{equation} \label{beta1}
\beta_j = \frac{i-k_d}{c_i\phi_j}, \quad j=s,\dots,m,
\end{equation} 
where $i=m-s+1$ and $c_i = \sum_{j=s}^{m}\frac{1}{\phi_j}$. Next, we provide the feasibility conditions for structure (a). Since $0\le\beta_j\le 1$, $c_i\phi_{s} \ge i-k_d$. Moreover, $\phi_{s-r}\ge \beta_s\phi_s\Rightarrow c_i\phi_{s-r} \ge i-k_d$. Additionally, $\beta_s\phi_s\ge\phi_{s-r-1}\Rightarrow i-k_d \ge c_i\phi_{s-r-1}$. Note that $k_a$-largest terms of $\beta_l\phi_l$ contain at least one term in the set $\{\beta_s\phi_s,\dots,\beta_m\phi_m\}$. Therefore, $r+m-s\ge k_a \ge r+1$. By substituting~\eqref{beta1} into  $W_{i,i+r} = \sum_{l=i_1}^{i_{k_a}}{\beta_l\phi_l} $,
\begin{eqnarray}
W_{i,i+r}=\frac{(k_a-r)(i-k_d)}{c_i}+\sum_{l=s-r}^{s-1}\phi_l.
\end{eqnarray}


The second case corresponds to the structure (b) in Lemma~\ref{lemma 8}. In this case, $\beta_1=\dots=\beta_{s-2}=1$. Since $\sum_{l=1}^{m}\beta_l=m-k_d$ and $\beta_s\phi_s=\dots=\beta_m\phi_m=\phi_{s-r-1}$, we obtain the following:
\begin{eqnarray}\label{beta2}
\beta_j = \frac{\phi_{s-r-1}}{\phi_j}, \quad j=s,\dots,m,\\
\beta_{s-1}= i-k_d+1-c_i\phi_{s-r-1},\label{beta3}
\end{eqnarray} 
where $i=m-s+1$ and $c_i = \sum_{j=s}^{m}\frac{1}{\phi_j}$. Next, we provide the feasibility conditions for structure (b). Since $\beta_{s-1}\phi_{s-1}\ge \phi_{s-r-1}$, and $0\le\beta_{s-1}< 1$, $c_i\phi_{s-r-1}+1 > i-k_d +1  \ge c_{i+1}\phi_{s-r-1}$. Note that $k_a$-largest terms of $\beta_l\phi_l$ contain at least one term in the set $\{\beta_s\phi_s,\dots,\beta_m\phi_m\}$. Therefore, $r+m-s\ge k_a \ge r+1$. By substituting~\eqref{beta2}, \eqref{beta3} into  $W_{i,i+r} = \sum_{l=i_1}^{i_{k_a}}{\beta_l\phi_l} $,
\begin{eqnarray}
W_{i,i+r}=(i-k_d+1-c_i\phi_{s-r-1}) \phi_{s-1} + (k_a-r)\phi_{s-r-1}+ \sum_{l=s-r}^{s-2}\phi_l.
\end{eqnarray}
\qed
\end{proof}

Since $W$ is a square amtrix of dimension $m$, $v^*$ can be computed in ${\cal{O}}(m^2)$. As in the case of the defender, we can show that $W$ can be computed in ${\cal{O}}(m)$ due to sparsity of $W$ (the feasiblity consitions).   
\begin{theorem}\label{Theorem4}
$v^*$ can be computed in ${\cal{O}}(m)$.
\end{theorem}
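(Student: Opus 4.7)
The plan is to mirror, for the defender's matrix $W$, the sparsity argument used for the attacker's matrix $U$ in Theorem~\ref{linear time attacker}. Write $W = W^{(a)} \cup W^{(b)}$, where $W^{(a)}$ collects the entries coming from structure (a) of Lemma~\ref{lemma 8} and $W^{(b)}$ the entries coming from structure (b). The goal is to show that each of these has only $\mathcal{O}(m)$ feasible (finite) entries, and that those entries can be located by a single monotone sweep over rows and columns.

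For $W^{(a)}$, I would first establish an analog of Lemma~\ref{lemma3}. The feasibility window is
\[
c_i\phi_{s-r} \ge i-k_d \ge c_i\phi_{s-r-1},\qquad r+1 \le k_a \le r+m-s,
\]
with $s=m-i+1$ and $c_i=\sum_{j=s}^{m}\phi_j^{-1}$. As $i$ increases, $s$ decreases and $c_i$ grows by exactly $\phi_s^{-1}$; as $r$ increases, $\phi_{s-r}$ weakly grows. I would verify case by case that if cell $(i,i+r)$ fails because $c_i\phi_{s-r-1} > i-k_d$, then so do all cells to its right in the same row (increasing $r$ only strengthens this inequality), while if the failure is $i-k_d > c_i\phi_{s-r}$, then the same holds for all cells below in the same column (since moving to row $i+1$ increases $i-k_d$ by $1$ and $c_{i+1}\phi_{s-r}$ by only the factor $\phi_{s-1}/\phi_s \le 1$ on the new summand). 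This propagation property means the set of feasible cells forms a staircase along an anti-diagonal, and the minimum can be found by alternating single steps down and right, using $\mathcal{O}(m)$ operations.

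For $W^{(b)}$, the condition $c_i\phi_{s-r-1}+1 > i-k_d+1 \ge c_{i+1}\phi_{s-r-1}$ pins $\phi_{s-r-1}$ inside a half-open interval of width $\phi_s^{-1}$ determined entirely by $i$. Since the $\phi_\ell$ are sorted, at most one index $s-r-1$ lies in this interval, hence for each row $i$ there is at most one feasible $r$. Finding this $r$ for each $i$ takes constant time by direct arithmetic (or an initial $\mathcal{O}(m)$ scan of the sorted $\phi$'s), giving at most $m$ feasible entries in $W^{(b)}$ and total cost $\mathcal{O}(m)$.

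Combining the two, $v^*$ is the minimum over at most $\mathcal{O}(m)$ finite entries, which is computed in $\mathcal{O}(m)$. The corresponding $\beta^*$ (and hence $q^*$, via the surjectivity of the combinatorial matrix from Lemma~\ref{lemma5}) is recovered from the minimizing cell using the closed forms~\eqref{beta1}--\eqref{beta3}. The main obstacle I anticipate is the case analysis in the $W^{(a)}$ sparsity argument: one must track simultaneously the three inequalities (upper and lower bounds on $\beta_s$, and the support constraint $r+1\le k_a \le r+m-s$) and check that each failure mode propagates monotonically in a definite direction. Unlike the attacker's $U^I$, structure (a) for $W$ also contains the global bound $k_a \ge r+1$, which can truncate the staircase and must be handled as an extra termination condition in the monotone sweep.
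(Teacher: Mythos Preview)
Your overall strategy---split $W$ into parts $W^{(a)}$ and $W^{(b)}$ according to the two structures in Lemma~\ref{lemma 8}, then show each contributes only $\mathcal{O}(m)$ feasible cells---is exactly the paper's approach. However, you have swapped which argument goes with which structure, and this matters because the swapped arguments do not both go through.

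For $W^{(a)}$ the paper does \emph{not} use a staircase sweep. It observes directly that for fixed $i$ (hence fixed $s$ and $c_i$) the condition $c_i\phi_{s-r}\ge i-k_d\ge c_i\phi_{s-r-1}$ says that the single number $(i-k_d)/c_i$ falls into the gap $[\phi_{s-r-1},\phi_{s-r}]$; since the $\phi_\ell$ are sorted, these gaps are disjoint, so at most one $r$ qualifies per row. This gives $\mathcal{O}(m)$ immediately. Your staircase argument for $W^{(a)}$ has a sign error: when the cell fails because $c_i\phi_{s-r-1}>i-k_d$, increasing $r$ \emph{decreases} $s-r-1$ and hence (since $\phi_\ell$ is increasing in $\ell$) \emph{decreases} $\phi_{s-r-1}$, so the inequality is weakened, not strengthened---the row propagation you claim does not hold.

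For $W^{(b)}$ your ``one per row'' argument is also flawed. The feasibility condition pins $\phi_{s-r-1}$ in the interval $\bigl((i-k_d)/c_i,\,(i-k_d+1)/c_{i+1}\bigr]$, whose width is $\bigl(c_i-(i-k_d)/\phi_{s-1}\bigr)/(c_ic_{i+1})$, not $\phi_s^{-1}$; and in any case the width of an interval on the $\phi$-axis does not bound how many sorted $\phi_\ell$ can lie inside it. The paper instead uses a monotone sweep here: writing $\beta_{s-1}=i-k_d+1-c_i\phi_{s-r-1}$, if $\beta_{s-1}>1$ at cell $(i,i+r)$ then all cells with larger $i$ and larger $r$ are infeasible, and if $\beta_{s-1}<0$ then all cells with smaller $i$ and smaller $r$ are infeasible. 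This gives the staircase structure and an $\mathcal{O}(m)$ search---for structure~(b), not~(a).

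So the decomposition is right and the two proof mechanisms are right, but they are attached to the wrong halves; once you swap them back, the argument matches the paper's.
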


\begin{proof}
Please refer to the Appendix for the proof.
\end{proof}

\section{Conclusion}
In this work, we address a security game as a zero-sum game in which the utility function has additive property. We analyzed the problem from attacker and defender's perspective, and we provided necessary conditions for the optimal solutions. Consequently, the structural properties of the saddle-point strategy for both players are given. Using the structural properties, we reach to the linear time algorithm, and semi-closed form solutions for computing the saddle points and value of the game.   

There are several directions of future research. One direction is to use the proposed structural properties to formulate a network design problem to minimize the impact of attacks, which leads to design resilient networks from security perspective. Another direction of future research is to generalize the results of this work to nonzero-sum games with different utility functions for attacker and defender. Finally, we plan to extend our analysis to security games with non-additive utility functions.

%
%
%
 \bibliographystyle{splncs04}
 \bibliography{ref}
\appendix
\section{Proof of Lemma~\ref{lemma5}}
\begin{proof}
The proof has already appeared in~\cite{emadi2019security}. The proof is by induction. We assume that the lemma is true for $m'=m-1$ and $k_a'=1,\dots,m-1$. $M_{\left[m,k_a \right]}$ can be written as
\begin{eqnarray}
M_{\left[m,k_a \right]}= \left[ {\begin{array}{*{20}{c}}
  \boldsymbol{1}^T&\boldsymbol{0}^T \\ 
  M_{\left[m-1,k_a-1 \right] }&M_{\left[m-1,k_a \right] }
\end{array}} \right].
\end{eqnarray} 
By separating $p$ into $\bar{p}_1$ and $\bar{p}_2$,
\begin{eqnarray} \nonumber
M_{\left[m,k_a \right]}p&=&\left[ {\begin{array}{*{20}{c}}
  \boldsymbol{1}^T&\boldsymbol{0}^T \\ 
  M_{\left[m-1,k_a-1 \right] }&M_{\left[m-1,k_a \right] }
\end{array}} \right]
\left[ {\begin{array}{*{20}{c}}
  \bar{p}_1\\ 
  \bar{p}_2
\end{array}} \right]\\
&=& \left[ {\begin{array}{*{20}{c}}
  \boldsymbol{1}^T\bar{p}_1\\ 
  M_{\left[m-1,k_a-1 \right] }\bar{p}_1+M_{\left[m-1,k_a \right]}\bar{p}_2
\end{array}} \right]. \label{feasability}
\end{eqnarray}
Second entry of the above matrix can be written in the following form:
\begin{eqnarray}
 \alpha_{1}M_{\left[m-1,k_a-1 \right] }{p'}_1+(1-\alpha_{1})M_{\left[m-1,k_a \right]}{p'}_2,
\end{eqnarray}
where, $\bar{p}_1=\alpha_{1}{p'}_1,\bar{p}_2=(1-\alpha_{1}){p'}_2$. Since we assumed that the lemma is true for $m'=m-1$ and $k_a'=1,\dots,k_a-1$, there exists ${p'}_1$ and ${p'}_2$ in a simplex such that the following hold:
\begin{eqnarray}
M_{\left[m-1,k_a-1 \right] }{p'}_1=\frac{k_a-1}{k_a-\alpha_{1}}
\left[ {\begin{array}{*{20}{c}}
  \alpha_{2}\\ 
  \vdots\\
  \alpha_{m}
\end{array}} \right],\\
M_{\left[m-1,k_a \right] }{p'}_2=\frac{k_a}{k_a-\alpha_{1}}
\left[ {\begin{array}{*{20}{c}}
  \alpha_{2}\\ 
  \vdots\\
  \alpha_{m}
\end{array}} \right].
\end{eqnarray}
By substituting the above expressions in~\eqref{feasability}, we conclude that $M_{\left[m,k_a \right]}p=\alpha$, where $p$ lies on a simplex. In other words, $\alpha$ lies in a convex-hull of columns of $M$. In order to complete the proof, we need to show that the lemma holds for the base cases $M_{\left[m+1,m \right]}$ and $M_{\left[m,1 \right]}$. 

Note that $M_{\left[m,1 \right]}=I_{m\times m}$. Therefore $p=\alpha$, and $\sum_{j=1}^{m}{p_j}=\sum_{j=1}^{m}{\alpha_{j}}=1$. Next, 
\begin{eqnarray}
M_{\left[m+1,m \right]}p=(\boldsymbol{1}\boldsymbol{1}^T-I)p=\alpha
\end{eqnarray}
\begin{eqnarray}
p&=&(\boldsymbol{1}\boldsymbol{1}^T-I)^{-1}\alpha=(\frac{\boldsymbol{1}\boldsymbol{1}^T}{m}-I)\alpha=\boldsymbol{1}-\alpha.
\end{eqnarray}
Consequently, lemma holds for base cases, which completes the proof. 
\qed
\end{proof}

\section{Proof of Theorem~\ref{Theorem4}}

\begin{proof}
Let $W^a$ and $W^b$ denote matrices of the following form: 

\begin{eqnarray}\label{Wa condition}
W_{i,i+r}^a = \left\{ {\begin{array}{*{20}{c}}
  W_{i,i+r}  && \text{satisfying structure (a) in Lemma~\ref{lemma 8}}\\ 
  0 && \text{otherwise}
\end{array}} \right.  ,
\end{eqnarray}
\begin{eqnarray}\label{Wb condition}
W_{i,i+r}^b = \left\{ {\begin{array}{*{20}{c}}
  W_{i,i+r}  && \text{satisfying structure (b) in Lemma~\ref{lemma 8}}\\ 
  0 && \text{otherwise}
\end{array}} \right.  .
\end{eqnarray}

First, note that all feasible entries of $W^a$ and feasible entries of $W^b$ are disjoint due to complimentary feasibility conditions ($ i-k_d \ge c_i\phi_{s-r-1}$ in $W^a$, and $ i-k_d < c_i\phi_{s-r-1}$ in $W^b$). Moreover, since $c_i\phi_{s-r} \ge i-k_d \ge c_i\phi_{s-r-1}$, for any $s$ there is at most one specific $r$ which satisfies conditions of $W^a$. This implies that computation of all feasible entries of $W^a$ is in ${\cal{O}}(m)$. Therefore, any row of $W$ has at most one feasible entry of $W^a$.

Next, we show that computing all feasible entries of $W^b$ is in ${\cal{O}}(m)$. From the second structure of Lemma~\ref{lemma 8}, for any $\hat{s},\hat{r}$ and $\hat{i}=m-\hat{s}+1$, if $\beta_{\hat{s}-1}>1$ ($\beta_{\hat{s}-1}<0$), ${(i,\hat{i}+r)^\text{th}}$ entry of $W$ is infeasible for all $i>\hat{i},r >\hat{r}$ ($i<\hat{i},r <\hat{r}$) since it implies $\beta_{s-1}>1$ ($\beta_{s-1}<0$). 
Therefore, at every entry of $W$, $\beta_{s-1}$ provides a criteria for which the rest of entries, in the same row and in the same column are entirely infeasible, and consequently it is not required to check the feasibility of those entries. In other words, value of $\beta_{s-1}$ provides a criteria for direction of searching for feasible entries of $W^{b}$. Thus, computing feasible entries of $W^{b}$ is in ${\cal{O}}(m)$. 
\qed
\end{proof}
\end{document}